\documentclass[12pt]{article}

\usepackage{url}
\usepackage{bbm}
\usepackage{mathtools}
\usepackage{amssymb}
\usepackage{amsthm}
\usepackage{empheq}
\usepackage{latexsym}
\usepackage{enumitem}
\usepackage{eurosym}
\usepackage{dsfont}
\usepackage{appendix}
\usepackage{color} 
\usepackage[unicode]{hyperref}
\usepackage{frcursive}
\usepackage[utf8]{inputenc}
\usepackage[T1]{fontenc}
\usepackage{geometry}
\usepackage{multirow}
\usepackage{todonotes}
\usepackage{lmodern}
\usepackage{anyfontsize}
\usepackage{pgfplots}
\usepackage{stmaryrd}
\usepackage{float}

\usepackage{graphicx}
\usepackage{caption}
\usepackage{subcaption}

\graphicspath{ {images/} }

\pgfplotsset{compat=newest}

\definecolor{red}{rgb}{0.7,0.15,0.15}
\definecolor{green}{rgb}{0,0.5,0}
\definecolor{blue}{rgb}{0,0,0.7}
\hypersetup{colorlinks, linkcolor={red},citecolor={green}, urlcolor={blue}}
            
\makeatletter \@addtoreset{equation}{section}

\newtheorem{theorem}{Theorem}
\newtheorem{theorem2}{Theorem}[section]

\newtheorem{lemma}[theorem2]{Lemma}
\newtheorem{proposition}[theorem2]{Proposition}

\newtheorem{definition}[theorem2]{Definition}
\newtheorem{remark}[theorem2]{Remark}

\setlength\parindent{0pt}
\geometry{hmargin=1.6cm,vmargin=2.4cm}
\DeclareUnicodeCharacter{014D}{\=o}
\setcounter{secnumdepth}{4}


\def \E{\mathbb{E}}

\def \N{\mathbb{N}}

\def \P{\mathbb{P}}

\def \R{\mathbb{R}}


\def\Bc{{\cal B}}

\def\Dc{{\cal D}}

\def\Lc{{\cal L}}

\def\Qc{{\cal Q}}

\usepackage{setspace}
\spacing{1.5}

\def\lf{\left\lfloor}   
\def\rf{\right\rfloor}
\def\lc{\left\lceil}   
\def\rc{\right\rceil}

\title{On bid and ask side-specific tick sizes\footnote{This work benefits from the financial support of the Chaires Analytics and Models for Regulation, Financial Risk. Bastien Baldacci, Joffrey Derchu and Mathieu Rosenbaum gratefully acknowledge the financial support of the ERC Grant 679836 Staqamof. The authors would like to thank Shlomo Ahal for inspiring discussions during the conference "The Regulation and Operation of Modern Financial Markets 2019" in Reykjavik, Iceland. The authors are also grateful to Mouhamad Dramé and Vincent Ragel for insightful comments.}}
\author{Bastien {\sc Baldacci}\footnote{\'Ecole Polytechnique, CMAP, 91128, Palaiseau, France,  bastien.baldacci@polytechnique.edu.} \and Philippe {\sc Bergault}\footnote{Université Paris 1 Panthéon-Sorbonne. Centre d'Economie de la Sorbonne. 106, boulevard de l'Hôpital, 75013 Paris, France, philippe.bergault@etu.univ-paris1.fr} \and Joffrey {\sc Derchu}\footnote{\'Ecole Polytechnique, CMAP, 91128, Palaiseau, France, joffrey.derchu@polytechnique.edu} \and  
Mathieu {\sc Rosenbaum}\footnote{\'Ecole Polytechnique, CMAP, 91128, Palaiseau, France, mathieu.rosenbaum@polytechnique.edu}}

\begin{document}

\maketitle
\begin{abstract}
The tick size, which is the smallest increment between two consecutive prices for a given asset, is a key parameter of market microstructure. In particular, the behavior of high frequency market makers is highly related to its value. We take the point of view of an exchange and investigate the relevance of having different tick sizes on the bid and ask sides of the order book. Using an approach based on the model with uncertainty zones, we show that when side-specific tick sizes are suitably chosen, it enables the exchange to improve the quality of liquidity provision. \\

\noindent{\bf Keywords:} High frequency trading, tick size, model with uncertainty zones, market making, stochastic control, viscosity solutions, financial regulation.
\end{abstract}

\section{Introduction}

The tick size is the smallest increment between two consecutive prices on a trading instrument. It is fixed by the exchange or regulator and typically depends on both the price of the asset and the traded volume, see \cite{huang2016predict,laruelle2018assessing}. It is a crucial parameter of market microstructure and its value is often subject of debates: a too small tick size leads to very frequent price changes whereas a too large tick size prevents the price from moving freely according to the investor's views. In this article, we focus on so-called large tick assets, that is assets for which the spread is most of the time equal to one tick. Such assets represent a large number of financial products, especially in Europe since MIFID II regulation, see \cite{laruelle2018assessing}.  \\

The tick size has a major influence on the ecosystem of financial markets, in particular on the activity of high frequency traders. Being usually considered as market makers, these agents are the main liquidity providers for most heavily traded financial assets. This means that they propose prices at which they are ready to buy (bid price) and sell (ask price) units of financial products. In \cite{frino2015impact}, the authors investigate the behavior of high frequency traders with respect to the relative tick size, which is defined as the ratio between the tick size and the price level. One of their findings is that everything else equal, stocks with a lower relative tick size attract a greater proportion of high frequency traders, see also \cite{dayri2015large,megarbane2017behavior}. This is because they can rapidly marginally adjust their quotes to seize price priority. In the case of a large tick asset, speed is still an important feature as market participants have to compete for queue priority in the order book, see \cite{huang2019glosten,moallemi2016model}. \\

Market makers (typically high frequency traders) face a complex optimization problem: making money out of the bid-ask spread (the difference between the bid and ask prices) while mitigating the inventory risk associated to price changes. This problem is usually addressed via stochastic control theory tools, see for example \cite{avellaneda2008high,cartea2015algorithmic,cartea2014buy,gueant2016financial,gueant2013dealing}. In classical market making models, the so-called efficient price, which represents the market consensus on the value of the asset at a given time, around which the market maker posts his quotes, is a continuous semi-martingale. The quotes of the market maker are continuous in terms of price values and not necessarily multiple of the tick size. However, in actual financial markets, transaction prices are obviously lying on the discrete tick grid. This discreteness of prices is a key feature which cannot be neglected at the high frequency scale since it plays a fundamental role in the design of market making strategies in practice. To get a more realistic market making model, one therefore needs to build a relevant continuous-time price dynamic with discrete state space to take into account this very important microstructural property of the asset. \\

To this end, we borrow the framework of the model with uncertainty zones introduced in \cite{robert2010new,robert2012volatility}. In this model, transaction prices are discrete and the current transaction price is modified only when the underlying continuous efficient price process crosses some predetermined zones. In our approach, we also consider that there exists an efficient price that market participants have in mind when making their trading decisions. Based on this efficient price, market participants build  ``fair'' bid and ``fair'' ask prices. These two prices are lying on the tick grid and represent the views of market participants on reasonable and tradable values for buying and selling, regardless of any inventory constraint. In our setting, depending on his views and his inventory constraint, the market maker chooses whether or not to quote a constant volume at these fair bid and ask prices. This is a stylized viewpoint as in practice the market maker will probably quote a larger spread rather than not quoting at all. The market maker increases (resp. decreases) his current ``fair'' bid price if the efficient price becomes ``sufficiently'' higher (resp. lower) than his current fair bid price and similarly for the ask side. The mechanism to determine whether the efficient price is sufficiently higher (resp. lower) than the current price is that of the model with uncertainty zones, described in Section \ref{section_model_zones}. \\

Usual market making models include a symmetric running penalty for the inventory process, often defined as $\phi\int_0^T Q_t^2 dt$ where $Q_t$ is the inventory of the market maker at time $t\in[0,T]$, $\phi>0$ is a risk aversion parameter and $T$ is the end of the trading period. It is well-known, see for example \cite{adrian2017intraday}, that for regulatory and operational reasons, market participants and especially market makers are reluctant to have a short inventory at the end of the trading day. This is mainly due to constraints imposed by the exchange/regulator and to the overnight repo rate that they have to pay. This asymmetry between long and short terminal inventory of the market maker gives the intuition of the potential relevance of some kind of asymmetry in the market design between buy and sell orders. \\

If some kind of asymmetry is implemented at the microstructure level, it can have important consequences on the profit of exchanges, as it notably depends on the number of processed orders. Typical ways to optimize the number of orders on platforms are the choice of relevant tick sizes and suitable fee schedules (which subsidize liquidity provision and tax liquidity consumption). In \cite{foucault2013liquidity}, the authors highlight the importance of differentiating maker and taker fees in order to increase the trading rate. In the more recent studies \cite{baldacci2019optimal,el2018optimal}, optimal make-take fees schedules are designed based on contract theory. In this work, the asymmetry we consider is not between liquidity consumers and liquidity providers but between buyers and sellers.  \\
 
The goal of this paper is to show the possible benefits for an exchange in terms of liquidity provision of side-specific tick sizes. To this end, we build an agent-based model where a high frequency market maker acts on a large tick asset. The exchange is mitigating the activity on its platform by choosing suitable tick sizes on the bid and ask sides. This means we have a different tick grid for buy and sell orders. For given the tick sizes chosen by the exchange, we formulate the stochastic control problem faced by the market maker who needs to maximize his Profit and Loss (PnL for short) while controlling his inventory risk, taking into account asymmetry between short and long inventory. We show existence and uniqueness of a viscosity solution to the Hamilton-Jacobi-Bellman (HJB for short) equation associated to this problem. Then, we derive a quasi-closed form for the optimal controls of the market maker (up to the value function). In particular, the role of the tick size in the decision of whether or not to quote is explicit: essentially, a large tick size implies a large profit per trade for the market maker but less market orders coming from market takers, and conversely. \\

Next, we solve the optimization problem of the exchange which can select optimal tick sizes knowing the associated trading response of the market maker. In our model, the exchange earns a fixed fee when a transaction occurs. Therefore, its remuneration is related to the quality of the liquidity provided by the market maker on its platform. Numerical results show that side-specific tick sizes are more suitable than symmetric ones both for the market maker and the exchange. The former is able to trigger more alternations in the sign of market orders, which is beneficial both for spread pocketing and inventory management (in contrast with the case where sequences of buy orders are followed by sequences of sell orders). The latter increases the number of transactions on its platform. We also show that a tick size asymmetry can offset short inventory constraints, therefore increasing the gains of both the market maker and the exchange. \\

The paper is organized as follows. In Section \ref{section_model_zones}, we give a reminder on the model with uncertainty zones and explain how we revisit it for market making purposes. The market maker and exchange's problems are described in Section \ref{section_hft_mm_exchange}. We also state here our results about existence and uniqueness of a viscosity solution associated to the control problem of the market maker and derive its optimal controls. Finally, Section \ref{sec_numerical_results} is devoted to numerical results and their interpretations. Proofs are relegated to an appendix.

\section{The model with uncertainty zones}\label{section_model_zones}

In this section, we provide a reminder on the model with uncertainty zones introduced in \cite{robert2010new,robert2012volatility}, and we adapt it to the framework of a market making problem with side-specific tick values. It is commonly admitted that low frequency financial price data behave like a continuous Brownian semi-martingale. However this is clearly not the case for high frequency data. The model with uncertainty zones reproduces sparingly and accurately the behavior of ultra high frequency transaction data of a large tick asset. It is based on a continuous-time semi-martingale efficient price and a one dimensional parameter $\eta\in [0,\frac 1 2]$. The key idea of the model is that when a transaction occurs at some value on the tick grid,  the efficient price is close enough to this value at the transaction time. This proximity is measured through the parameter $\eta$. \\

We define the efficient price $(S_t)_{t\in[0,T]}$ on a filtered probability space $(\Omega,\mathcal{F},\mathbb{P})$ where $T$ is the trading horizon. The logarithm of the efficient price $(Y_t)_{t\in[0,T]}$ is an $\mathcal{F}_t$-adapted continuous Brownian semi-martingale of the form
\begin{align*}
    Y_t=\log(S_t)=\log(S_0)+\int_0^t a_s\mathrm{d}s + \int_0^t \sigma_{s^-}\mathrm{d}W_s,
\end{align*}
where $W$ is an $\mathcal{F}$-Brownian motion, and $(\sigma_t)_{t\in [0,T]}$ is an $\mathcal{F}$-adapted process with càdlàg paths and $(a_t)_{t\in [0,T]}$ is $\mathcal{F}$-progressively measurable. Transaction prices lie on two fixed tick grids, defined by $\{k\alpha^a,k\alpha^b\}$ where $\alpha^a$ (resp. $\alpha^b$) is the tick size on the ask (resp. bid) side and $k\in \mathbb{N}$. For $0\leq \eta^i\leq \frac{1}{2}$ and $i\in\{a,b\}$, we define the zone $U^i_k=[0,\infty)\times (d^i_k,u^i_k)$ with
\begin{align}
    d^i_k=(k+\frac{1}{2}-\eta^i)\alpha^i, u^i_k= (k+\frac{1}{2}+\eta^i)\alpha^i.
\end{align}
Therefore $U_k^a$ is a band of size $2\eta^a\alpha^a$ around the ask mid-tick grid value $(k+\frac{1}{2})\alpha^a$ and $U_k^b$ is a band of size $2\eta^b\alpha^b$ around the bid mid-tick grid value $(k+\frac{1}{2})\alpha^b$. We call these bands the uncertainty zones. The zones on the bid and ask sides are characterized by the parameters $\eta^b,\eta^a$ which control the width of the uncertainty zones. We will see in the next section how the fair bid and ask prices are deduced from the efficient price dynamics across the uncertainty zones. In particular, the larger $\eta^i$, the farther from the last traded price (on the bid or ask side) the efficient price has to be so that a price change occurs. The idea behind the model with uncertainty zones is that, in some sense, market participants feel more comfortable when the asset price is constant than when it is constantly moving. However, there are times when the transaction price has to change because they consider that the last traded price value is not reasonable anymore. \\

For sake of simplicity, we assume that transaction prices cannot jump by more than one tick. We also define the time series of bid and ask transaction times leading to a price change as $(\tau^b_j,\tau_j^a)_{j\geq 0}$. The last traded bid or ask price process is characterized by the couples of transaction times and transaction prices with price changes $(\tau_j; P_{\tau_j^i}^i)_{j\geq 0}$ where $P_{\tau^i_j}^i=S_{\tau^i_j}^{(\alpha^i)}$, the superscript $(\alpha^i)$ denoting the rounding to the nearest $\alpha^i$. \\

The dynamics of the $(\tau_j^i)$ will be described in Section \ref{section_hft_mm_exchange}. One can actually show that the efficient price can be retrieved from transaction data using the equation
\begin{align*}
 S_{\tau_j^i}=S_{\tau_j^i}^{(\alpha^i)}-\alpha^i (\frac{1}{2}-\eta^i)\text{sgn}\big(S_{\tau_j^i}^{(\alpha^i)}-S_{\tau_{j-1}^i}^{(\alpha^i)}\big), \quad i\in\{a,b\}, j\in\mathbb{N}.
\end{align*}
This formula is particularly useful in order to derive ultra high frequency estimators of volatility and covariation (see \cite{robert2012volatility}). The parameters $\eta^i$ can be estimated very easily. Let $N_{\alpha^i,t}^{(a)}$ and $N_{\alpha^i,t}^{(c)}$ be respectively the number of alternations and continuations\footnote{An alternation/continuation corresponds to two consecutive price changes in the opposite/same direction.} of one tick over the period $[0,t]$. Then, an estimator of $\eta^i$ over $[0,t]$ is given by
\begin{align*}
    \hat{\eta}_{\alpha^i,t}=\frac{N_{\alpha^i,t}^{(c)}}{2N_{\alpha^i,t}^{(a)}}.
\end{align*}
We refer to \cite{robert2010new,robert2012volatility} for further details on these estimation procedures. In this paper, we use the model with uncertainty zones for market making purposes rather than for statistical estimation.

\section{High frequency market making under side-specific tick values and interaction with the exchange}\label{section_hft_mm_exchange}

\subsection{The market maker's problem}
We consider a high frequency marker maker acting on an asset whose efficient price $S_t$ has the dynamics
\begin{align*}
    \mathrm{d} S_t=\sigma\mathrm{d}W_t,
\end{align*}
where $\sigma>0$ denotes the volatility of the asset. He uses the model with uncertainty zones described earlier to materialize his views on the fair bid and ask prices. He increases (resp. decreases) his bid price if the efficient price is ``sufficiently'' higher (resp. lower) than his current fair bid price. The notion of ``sufficiently'' higher or lower is determined by the uncertainty zones parameters $\eta^a, \eta^b$, and the tick sizes $\alpha^a,\alpha^b$. If $\eta^a$ is small (resp. large), the market maker changes more (resp. less) frequently his ask price, and similarly for the bid price with $\eta^b$. This leads to the following definition of fair bid and ask prices of the market maker $S^a,S^b$:\footnote{Note that we can have situations where the bid price is above the ask price. However, recall that $S^a$ and $S^b$
are only views about the fair bid and ask prices under the constraint that they have to lie on the tick grids.}
\begin{align*}
   &  S_t^a=S_{t^{-}}^a+\alpha^a \mathbf{1}_{\{ S_t-S^a_{t^{-}}>(\frac{1}{2}+\eta^a)\alpha^a\}}-\alpha^a \mathbf{1}_{\{ S_t-S^a_{t^{-}}<-(\frac{1}{2}+\eta^a)\alpha^a\}}, \\
   &  S_t^b=S_{t^{-}}^b+\alpha^b \mathbf{1}_{\{ S_t-S^b_{t^{-}}>(\frac{1}{2}+\eta^b)\alpha^b\}}-\alpha^b \mathbf{1}_{\{ S_t-S^b_{t^{-}}<-(\frac{1}{2}+\eta^b)\alpha^b\}}.
\end{align*}
Thus the fair bid (resp. ask) is modified when the efficient price is close enough to a new tradable price on the tick grid with mesh $\alpha^b$ (resp. $\alpha^a$).
\begin{remark}
Note that in the case $\alpha^a=\alpha^b,\eta^a=\eta^b$, the fair best bid is equal to the fair best ask. This means that at a given time, a buy or sell order would be at the same price. In this situation, in our stylized view, the market maker would probably quote only on one side (bid or ask). It is consistent with the standard form of the model with uncertainty zones, where, at a given time, transactions can only happen only on one side of the market, depending on the location of the efficient price. Still, the market maker collects the spread from transactions occurring at different times as it is the case in practice.
\end{remark}
We assume a constant volume of transaction equal to one. The market maker can choose to be present or not for a transaction at the bid (with a price $S^b)$ or at the ask (with a price $S^a$). The corresponding cash process at terminal time $T$ is given by
\begin{align*}
    X_T=\int_0^T \Big(  S_t^a\mathrm{d}N_t^a -  S_t^b\mathrm{d}N_t^b\Big),
\end{align*}
where the $N_t^i$ represent the number of transactions on the bid or ask side between $0$ and $t$. In this framework, the inventory of the market maker is given by $Q_t = N_t^b - N_t^a \in \mathcal{Q}=[-\tilde{q},\tilde{q}]$ where $\tilde{q}$ is the risk limit of the market maker. For $i\in\{a,b\}$, the dynamics of $N_t^i$ is that of a point process with intensity 
\begin{align*}
    \lambda(\ell_t^i,Q_t):=\frac{\lambda\ell_t^i}{1+(\kappa\alpha^i)^2}\mathbf{1}_{\{\phi(i)Q_t > -\overline{q}\}}, \quad \phi(i)= \mathbf{1}_{\{i=a\}} - \mathbf{1}_{\{i=b\}}.
\end{align*} 
The process $\ell_t^i\in \{0,1\}$ is the market maker's control which lies in the set of $\mathcal{F}-\text{predictable}$ processes with values in $\{0,1\}$ denoted by $\mathcal{L}$. The parameter $\kappa>0$ controls the sensitivity of the intensities to $\alpha^i$, and $\lambda>0$ is a scale parameter. When the market maker does not want to be present on the bid (resp. ask side) at the price $S^b$ (resp. $S^a)$ he sets $\ell^b=0$ (resp. $\ell^a=0$) and conversely. In our large tick asset setting, the situation where the market maker is not present is a simplified way to model the case where the market maker's quote is higher than the best possible limit. At a given time $t\in[0,T]$, when $\ell_t^b=0$ (resp $\ell_t^a=0)$, the intensity of the point process $N_t^b$ (resp. $N_t^a$) is equal to zero so that there are no incoming transactions. In addition to this, market takers are more confident to send market orders when the tick size is small, as the market maker has more flexibility to adjust his bid and ask prices.\footnote{When the tick size is smaller, the market takers are more willing to trade. This does not necessarily lead to a higher number of orders as it depends on the market maker's presence.} This explains the decreasing shape of the intensities of market order arrivals from market takers with respect to the tick size. The chosen parametric form for the intensities ensures no degenerate behavior when the tick size gets close to zero.  


The marked-to-market value of the market maker's portfolio at time $t$ is defined as $Q_tS_t$. His optimization problem writes
\begin{align}\label{optimization_pb_MM}
    \sup_{\ell\in\mathcal{L}}\mathbb{E}\Big[X_T +Q_T(S_T-AQ_T)   - \phi\int_0^T Q_s^2 ds - \phi_- \int_0^T |Q_s|^2\mathbf{1}_{Q_s<0}ds\Big],
\end{align}
where $\phi>0$ represents the risk-aversion parameter of the market maker, $\phi_- >0$ is the additional risk aversion of the market maker toward short position on $[0,T]$ and $AQ_T^2$, with $A>0$, is a penalty term for the terminal inventory position regardless of its sign. In this setting, the market maker wishes to hold a terminal inventory close to zero because of the quadratic penalty $AQ_T^2$. The term $\phi\int_0^T Q_s^2 ds$ penalizes long or short positions over the trading period. Problem \eqref{optimization_pb_MM} can of course be rewritten as
\begin{align*}
    \sup_{\ell\in\mathcal{L}}\mathbb{E}\Big[ &Q_T(S_T-AQ_T) +\int_0^T \left\{S_s^a \lambda(\ell^a_s) -  S_s^b\lambda(\ell^b_s) - \phi Q_s^2 - \phi_- Q_s^2\mathbf{1}_{Q_s<0} \right\} ds 
    \Big].
\end{align*}
We define the corresponding value function $h$ defined on the open set
\begin{align*}
\mathcal{D} = \left\{ (S^a,S^b,S) \in \alpha^a \mathbb{Z} \times \alpha^b \mathbb{Z} \times \mathbb{R} \text{ such that } -\left( \frac 12 + \eta^a\right)\alpha^a < S-S^a < \left( \frac 12 + \eta^a\right)\alpha^a\right).\\
\left. \text{ and } -\left( \frac 12 + \eta^b\right)\alpha^b < S-S^b < \left( \frac 12 + \eta^b\right)\alpha^b  \right\}
\end{align*}
by
\begin{align}\label{function_h}
\begin{split}
   h(t,S^a, S^b, S, q) = \sup_{\ell\in\mathcal{L}_t}  \mathbb{E}_{t,S^a, S^b,  S,q}\Big[ &  Q_T(S_T-AQ_T)+\!\int_t^T\!\Big\{S_s^a \lambda(\ell^a_s) -  S_s^b \lambda(\ell^b_s) \\
    & - \phi Q_s^2 - \phi_- Q_s^2\mathbf{1}_{Q_s<0} \Big\}\!\mathrm{d}s
    \Big],  
\end{split}
\end{align}
where $\mathcal{L}_t$ denotes the restriction of admissible controls to $[t,T]$. 
We define the boundary $\partial \mathcal D$ of $\mathcal D$ as 
\begin{align*}
    \partial \mathcal D = \left\{ (S^a,S^b,S) \in \alpha^a \mathbb{Z} \times \alpha^b \mathbb{Z} \times \mathbb{R} \text{ such that } S-S^a = \pm \left( \frac 12 + \eta^a\right)\alpha^a \text{ and/or } S-S^b = \pm \left( \frac 12 + \eta^b\right)\alpha^b . \right\},
\end{align*}
and write $\bar{\mathcal D} = \mathcal D \cup \partial \mathcal D .$ For given $(S^a,S^b)$, if $(S^a,S^b,S)\in \partial \mathcal D$, it means that $S$ corresponds to an efficient price value that triggers a modification of the fair bid or ask price. \\ 

The Hamilton-Jacobi-Bellman equation associated to this stochastic control problem is given by
\begin{align}
     0 = & \partial_t h(t,S^a,S^b,S,q)-\phi q^2 - \phi_- q^2\mathbf{1}_{q<0} + \frac{1}{2}\sigma^2 \partial_{SS}h(t,S^a,S^b,S,q) \nonumber \\
     & + \frac{\lambda}{1+(\kappa\alpha^a)^2}\max_{\ell^a\in \{0,1\}}\Bigg\{ \ell^a\Big(S^a+h(t,S^a,S^b,S,q-\ell^a)-h(t,S^a,S^b,S,q)\Big)\Bigg\} \nonumber\\
     & + \frac{\lambda}{1+(\kappa\alpha^b)^2}\max_{\ell^b\in \{0,1\}}\Bigg\{ \ell^b\Big((-S^b)+h(t,S^a,S^b,S,q+\ell^b)-h(t,S^a,S^b,S,q)\Big)\Bigg\},
\label{HJBMM}
\end{align}
for $(t,S^a,S^b,S,q) \in [0,T) \times \mathcal D \times \mathcal{Q},$ with terminal condition
\begin{align}
    h(T,S^a,S^b,S,q)=q(S-Aq).
    \label{Tcond}
\end{align}

Let us consider the function $h$ defined in \ref{function_h}. For $(t,S^a,S^b,S,q) \in [0,T) \times \mathcal \partial D \times \mathcal{Q}$, and $(t_n,S^a_n,S^b_n,S_n,q_n)_{n\in\N}$ a sequence in $[0,T) \times \mathcal D \times \mathcal{Q}$ which converges to $(t,S^a,S^b,S,q)$, we will show that $h(t_n,S^a_n,S^b_n,S_n,q_n)$ converges independently of the sequence and we denote by $h(t,S^a,S^b,S,q)$ its limit.
On $[0,T) \times \partial \mathcal D \times \mathcal{Q}$, we will show the following boundary conditions (which we will naturally impose for the solution of \ref{HJBMM}):
\begin{align}\label{Bcond}
\begin{split}
      0= &\  \mathbf{1}_{\{S-S^a=(\frac{1}{2}+\eta^a)\alpha^a,\ S-S^b<(\frac{1}{2}+\eta^b)\alpha^b\}}\big(h(t,S^a+\alpha^a,S^b,S,q)-h(t,S^a,S^b,S,q) \big)\\
     & + \mathbf{1}_{\{S-S^a<(\frac{1}{2}+\eta^a)\alpha^a,\ S-S^b=(\frac{1}{2}+\eta^b)\alpha^b\}}\big(h(t,S^a,S^b+\alpha^b,S,q)-h(t,S^a,S^b,S,q) \big)\\
     & + \mathbf{1}_{\{S-S^a=(\frac{1}{2}+\eta^a)\alpha^a,\ S-S^b=(\frac{1}{2}+\eta^b)\alpha^b\}}\big(h(t,S^a+\alpha^a,S^b+\alpha^b,S,q)-h(t,S^a,S^b,S,q) \big) \\
     & + \mathbf{1}_{\{S-S^a=-(\frac{1}{2}+\eta^a)\alpha^a,\ S-S^b>-(\frac{1}{2}+\eta^b)\alpha^b\}}\big(h(t,S^a-\alpha^a,S^b,S,q)-h(t,S^a,S^b,S,q) \big) \\ 
     & + \mathbf{1}_{\{S-S^a>-(\frac{1}{2}+\eta^a)\alpha^a,\ S-S^b=-(\frac{1}{2}+\eta^b)\alpha^b\}}\big(h(t,S^a,S^b-\alpha^b,S,q)-h(t,S^a,S^b,S,q) \big) \\
     & + \mathbf{1}_{\{S-S^a=-(\frac{1}{2}+\eta^a)\alpha^a,\ S-S^b=-(\frac{1}{2}+\eta^b)\alpha^b\}}\big(h(t,S^a-\alpha^a,S^b-\alpha^b,S,q)-h(t,S^a,S^b,S,q) \big).
\end{split}
\end{align}

In other words, the value function varies continuously when the efficient price leaves an uncertainty zone and the prices $S^a$ and $S^b$ are modified.\footnote{Note that, as the terminal condition does not depend on $S^a$ and $S^b$, it also satisfies this boundary condition on $\partial \mathcal D$.} In the following, we say that a function defined on $[0,T) \times \mathcal D \times \mathcal{Q}$ satisfies the continuity conditions if it satisfies \eqref{Bcond}.\\

The following proposition is of particular importance for the existence and uniqueness of a viscosity solution associated to the control problem of the market maker.
\begin{proposition}\label{prop_continuity_verif}
The function $h$ defined in Equation \eqref{function_h} is continuous on $\mathcal{D}$ and satisfies the continuity conditions \eqref{Bcond}.
\end{proposition}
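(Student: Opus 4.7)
The plan is to split the proof in two. First, I would establish continuity of $h$ on the open set $\mathcal{D}$, which since $S^a,S^b\in\alpha^a\Z\times\alpha^b\Z$ and $q\in\mathcal{Q}\cap\Z$ are discrete reduces to continuity of $(t,S)\mapsto h(t,S^a,S^b,S,q)$ for each fixed $(S^a,S^b,q)$. Second, I would analyze the behavior of $h$ as $(t_n,S^a_n,S^b_n,S_n,q_n)\in[0,T)\times\mathcal{D}\times\mathcal{Q}$ approaches a point of $[0,T)\times\partial\mathcal{D}\times\mathcal{Q}$; by discreteness, one may assume $(S^a_n,S^b_n,q_n)=(S^a,S^b,q)$ for $n$ large. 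Both parts rest on the dynamic programming principle up to the first time the efficient price exits the open strip
\begin{align*}
I(S^a,S^b)=\big(S^a-(\tfrac12+\eta^a)\alpha^a,S^a+(\tfrac12+\eta^a)\alpha^a\big)\cap\big(S^b-(\tfrac12+\eta^b)\alpha^b,S^b+(\tfrac12+\eta^b)\alpha^b\big),
\end{align*}
during which the fair prices stay constant, combined with a pathwise coupling of the driving Brownian motion and Poisson marks.

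For the interior continuity, I would fix $(S^a,S^b,q)$, take two initial conditions $(t_1,S_1),(t_2,S_2)$ in $[0,T)\times I$, and couple the two control problems by using the same Brownian motion $W$, the same unit-mean exponentials driving the jumps of $N^a$ and $N^b$, and the same admissible control $\ell$. Before the two exit times from $I$, the fair bid and ask do not jump, the inventory paths coincide (the intensities depend only on $\ell$, $\alpha$ and $Q$), and the running penalty and cash contributions are identical. Standard Brownian hitting-time estimates then ensure that exit times and exit locations converge in $L^p$ as $(t_1,S_1)\to(t_2,S_2)$, while the uniform bound $|h|\leq C$ (inherited from bounded $T$, bounded $\mathcal{Q}$, bounded intensities $\lambda/(1+(\kappa\alpha^i)^2)$, and bounded fair prices inside $I$) provides the uniform integrability needed for dominated convergence. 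Symmetrizing the comparison in $\ell$ via an $\eps$-optimal control argument then yields continuity of $h$ on $\mathcal{D}$.

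For the boundary conditions, I treat the representative edge case $S-S^a=(\tfrac12+\eta^a)\alpha^a$ and $-(\tfrac12+\eta^b)\alpha^b<S-S^b<(\tfrac12+\eta^b)\alpha^b$; the three other edges are handled identically, and the two same-sign corners (lines 3 and 6 of \eqref{Bcond}) follow the same blueprint with both fair prices updating simultaneously. Mixed-sign corners such as $S-S^a=(\tfrac12+\eta^a)\alpha^a$, $S-S^b=-(\tfrac12+\eta^b)\alpha^b$ cannot be approached from inside $\mathcal{D}$, since the first condition would force $S_n<S$ and the second $S_n>S$, which is consistent with \eqref{Bcond} listing only the four edges and the two same-sign corners. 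Let $\tau_n$ be the first exit time from $I(S^a,S^b)$ after $t_n$; since $S_n$ converges to the upper $S^a$-boundary while the three other boundaries stay bounded away, classical Brownian estimates give $\E[\tau_n-t_n]\to 0$ and $\P[\text{exit through the upper }S^a\text{-boundary}]\to 1$. Writing $R^\ell_s$ for the bounded running integrand of \eqref{function_h}, the dynamic programming principle yields
\begin{align*}
h(t_n,S^a,S^b,S_n,q)=\sup_{\ell\in\mathcal{L}_{t_n}}\E\Big[\int_{t_n}^{\tau_n}R^\ell_s\,\mathrm{d}s+h(\tau_n,S^a_{\tau_n},S^b_{\tau_n},S_{\tau_n},Q_{\tau_n})\Big].
\end{align*}
The integral is bounded by $C(\tau_n-t_n)$ and vanishes in $L^1$; on the dominant event one has $S^a_{\tau_n}=S^a+\alpha^a$, $S^b_{\tau_n}=S^b$, $S_{\tau_n}\to S$ and $Q_{\tau_n}\to q$ since $N^a_{\tau_n}-N^a_{t_n}+N^b_{\tau_n}-N^b_{t_n}\to 0$ in $L^1$, so by the interior continuity step $h(\tau_n,S^a+\alpha^a,S^b,S_{\tau_n},Q_{\tau_n})\to h(t,S^a+\alpha^a,S^b,S,q)$. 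The residual event has vanishing probability and bounded contribution, so the limit equals $h(t,S^a+\alpha^a,S^b,S,q)$, which is independent of the sequence and matches the first line of \eqref{Bcond}.

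The main technical difficulty is commuting the supremum over controls with the limit in the DPP identity. This is handled via an $\eps$-optimal control argument, using the uniform $L^\infty$ bound on $h$ together with boundedness of the controlled intensities and of $\mathcal{Q}=[-\tilde q,\tilde q]$ to apply dominated convergence inside the supremum.
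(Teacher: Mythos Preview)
Your overall architecture matches the paper's: apply the dynamic programming principle up to the first exit of $S$ from the strip $I(S^a,S^b)$, use Brownian exit-time estimates to control the stopping time and exit location, and treat the boundary case as the limit where the exit time collapses to zero and the exit side is determined. The boundary argument you give is essentially identical to what the paper sketches.

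The one genuine methodological difference is how you dispose of the supremum over $\ell$. You propose a pathwise coupling (same Brownian increments, same Poisson marks, same control) together with an $\varepsilon$-optimal control argument to interchange $\sup$ and limit. The paper instead exploits that on a short horizon the inventory does not jump with probability at least $1-\eta$ (since the intensities are bounded by $\lambda$); on that event the DPP expectation is independent of $\ell$, so the supremum is dropped entirely and one is left comparing two pure Brownian expectations, which they handle via explicit hitting-time densities (Borodin--Salminen). Their trick is cleaner here because it sidesteps both the $\varepsilon$-optimal control bookkeeping and the time-shift subtlety in your coupling (you start at two different times $t_1,t_2$, so ``same Brownian, same control'' needs a bit of care to formalize). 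Your route is more portable to settings where the jump intensity is not small, but in this model the paper's shortcut is the simpler execution.
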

The proof is given in Appendix \ref{proof_continuity_h} and relies on the specific structure of our model based on hitting times of a Brownian motion. We now state the main theorem of this article, whose proof is relegated to Appendix \ref{Proof visco MM}.
\begin{theorem}
The value function $h$ is the unique continuous viscosity solution to Equation \eqref{HJBMM} on $[0,T) \times \mathcal D \times \mathcal{Q}$ with terminal condition \eqref{Tcond} and satisfying the continuity conditions.
\label{viscoMM}
\end{theorem}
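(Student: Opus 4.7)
The overall plan is the classical viscosity-solution strategy: derive a dynamic programming principle (DPP), use it to show that $h$ is a viscosity solution of \eqref{HJBMM}, and then establish uniqueness via a comparison principle. The regularity input we need is already available: Proposition \ref{prop_continuity_verif} gives continuity of $h$ on $\mathcal D$ together with the gluing relations \eqref{Bcond} across $\partial \mathcal D$, which is the point specific to the uncertainty-zone structure and would otherwise be the main technical step.

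For existence, I would first establish the DPP in the standard form: for every stopping time $\tau$ with values in $[t,T]$,
\begin{align*}
h(t,S^a,S^b,S,q) = \sup_{\ell \in \mathcal L_t} \E_{t,S^a,S^b,S,q}\!\Big[\int_t^\tau\!\big\{S_s^a\lambda(\ell_s^a,Q_s) - S_s^b\lambda(\ell_s^b,Q_s) - \phi Q_s^2 - \phi_- Q_s^2\mathbf 1_{Q_s<0}\big\}ds + h(\tau,S_\tau^a,S_\tau^b,S_\tau,Q_\tau)\Big],
\end{align*}
which follows by standard measurable-selection arguments using the Markov structure of the state, the finiteness of the control set $\{0,1\}^2$, and the continuity of $h$ established in Proposition \ref{prop_continuity_verif}. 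To obtain the HJB equation at an interior point $(t_0,S_0^a,S_0^b,S_0,q_0)\in [0,T)\times\mathcal D\times\mathcal Q$, I would take a smooth test $\varphi$ such that $h-\varphi$ attains a local maximum there, and apply the DPP with $\tau=(t_0+\delta)\wedge \tau_{\partial\mathcal D}$, where $\tau_{\partial\mathcal D}$ is the first exit time of $(S^a,S^b,S)$ from $\mathcal D$. For $\delta$ small enough, Ito's formula applied to $\varphi$ combined with the compensators of $N^a,N^b$ and the Brownian dynamics of $S$ yields the supersolution inequality after choosing each of the four constant strategies $(\ell^a,\ell^b)\in\{0,1\}^2$, dividing by $\delta$ and letting $\delta\downarrow 0$. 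The subsolution inequality is obtained dually with $\varepsilon$-optimal controls.

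For uniqueness, I would prove a comparison principle: any upper semi-continuous viscosity subsolution $u$ and lower semi-continuous viscosity supersolution $v$ satisfying \eqref{Tcond} and the continuity conditions \eqref{Bcond} satisfy $u\leq v$. The key observation is that $q\in\mathcal Q$ is a finite discrete set and $(S^a,S^b)$ lies on $\alpha^a\Z\times\alpha^b\Z$; the only continuous variables are $(t,S)$, and for each $(S^a,S^b)$ the relevant range of $S$ is a bounded open interval. Arguing by contradiction from $M:=\sup(u-v)>0$, I would double variables only in $(t,S)$ through
\begin{align*}
\Psi_{\varepsilon,\beta}(t,t',S,S') = u(t,S^a,S^b,S,q)-v(t',S^a,S^b,S',q) - \frac{(S-S')^2+(t-t')^2}{2\varepsilon} - \beta t,
\end{align*}
maximize over the closure of the relevant strip and apply the Crandall--Ishii lemma. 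Sending $\varepsilon\to 0$ and then $\beta\to 0$, the $\beta t$ penalty, the Lipschitz dependence in $S$ of the Hamiltonian, and the standard bound on the nonlocal terms (which only shift $q$ by $\pm 1$, hence contribute differences bounded by $M$ that are absorbed) produce the desired contradiction. The \emph{main obstacle} is the boundary behaviour on $\partial \mathcal D$: the conditions \eqref{Bcond} are not Dirichlet data but gluing relations between neighbouring strips. To rule out the maximum of $u-v$ being attained on $\partial \mathcal D$, I would use \eqref{Bcond} directly: at such a point the values of $u$ and $v$ coincide with their values at a point in an adjacent strip with shifted $(S^a,S^b)$, and iterating finitely many times (permissible since $(S^a,S^b,q)$ explored by the process is bounded and discrete) transports the supremum to an interior maximizer where the doubling argument applies. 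Two applications of comparison then give $u=v$ for any two continuous viscosity solutions, completing the proof.
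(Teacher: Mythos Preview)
Your existence argument via the DPP, It\^o's formula on a small time interval before the first exit from $\mathcal D$, and constant controls is exactly the paper's approach; no issue there.

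The comparison argument has a genuine gap and also diverges from the paper in a way worth flagging. First, the domain $[0,T)\times\mathcal D\times\mathcal Q$ is \emph{not} compact: although for fixed $(S^a,S^b)$ the admissible $S$-interval is bounded, there are infinitely many pairs $(S^a,S^b)\in\alpha^a\Z\times\alpha^b\Z$, and your doubling functional $\Psi_{\varepsilon,\beta}$ carries no coercive penalty in $S$ (only in $S-S'$). Nothing in your argument forces the supremum of $u-v$ or of $\Psi_{\varepsilon,\beta}$ to be attained. The paper deals with this by restricting to solutions of polynomial growth and adding a penalty $\varepsilon e^{-\mu t}(1+\|S\|_2^{2p}+\|R\|_2^{2p})$, which localises the maximiser. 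Relatedly, your justification for the boundary step, ``$(S^a,S^b,q)$ explored by the process is bounded and discrete'', is incorrect: the state is not bounded. The conclusion you want is nevertheless true for a simpler reason: one application of the gluing relation \eqref{Bcond} shifts $S^a$ (and/or $S^b$) by one tick and lands strictly inside $\mathcal D$, so no iteration is needed.

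Second, your treatment of the nonlocal terms (``differences bounded by $M$ that are absorbed'') is at best elliptic. Without a zeroth-order term in the equation, the only way this works is to take the global supremum of $u-v$ over \emph{all} $q$ as well, so that at the maximiser $u(\cdot,q^\star\pm1)-v(\cdot,q^\star\pm1)\le M=u(\cdot,q^\star)-v(\cdot,q^\star)$, which makes the jump-term difference nonpositive in the $\varepsilon\to0$ limit; but this again presupposes attainment of the sup. The paper bypasses this entirely by first rescaling $\tilde h=e^{\rho t}h$, which inserts a term $-\rho\tilde h$ into the equation. In the doubled inequality this produces $\rho(\tilde V-\tilde U)>0$ on the left; on the right they simply drop the supersolution jump terms (each $\max$ is $\ge0$) and bound the subsolution jump terms by a constant via continuity, then choose $\mu$ large so that $\partial_t\phi$ dominates. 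This is cruder but more robust, and does not require any cancellation across inventory levels.
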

The value function depends on five variables. However, as $(S^a,S^b)$ takes value in $\alpha^a \mathbb{N} \times \alpha^b \mathbb{N}$, it can essentially be reduced to three variables as we now explain. For any $(i,j)\in \mathbb{N} ^2 $, we introduce the function $h^{i,j}$ defined on $$[0,T] \times \underbrace{\left(\alpha^a i - (\frac{1}{2}+\eta^a)\alpha^a, \alpha^a i + (\frac{1}{2}+\eta^a)\alpha^a  \right) \cap \left( \alpha^b j -(\frac{1}{2}+\eta^b)\alpha^b, \alpha^b j + (\frac{1}{2}+\eta^b)\alpha^b \right)}_{=\mathcal D _{i,j}} \times \mathcal{Q}$$ by  $h^{i,j}(t,S,q)= h(t,\alpha^a i, \alpha^b j,S,q)$. Then $h^{i,j}$ is the solution of the following HJB equation:
\begin{align*}
     0 = & \partial_t h^{i,j}(t,S,q)-\phi q^2 -  \phi_- (q)_-^2\mathbf{1}_{q<0} + \frac{1}{2}\sigma^2 \partial_{SS}h^{i,j}(t,S,q) \\
     & + \frac{\lambda}{1+(\kappa\alpha^a)^2}\max_{\ell^a\in \{0,1\}}\Bigg\{ \ell^a\Big(\alpha^a i+h^{i,j}(t,S,q-\ell^a)-h^{i,j}(t,S,q)\Big)\Bigg\}\\
     & + \frac{\lambda}{1+(\kappa\alpha^b)^2}\max_{\ell^b\in \{0,1\}}\Bigg\{ \ell^b\Big(-\alpha^b j+h^{i,j}(t,S,q+\ell^b)-h^{i,j}(t,S,q)\Big)\Bigg\},
\end{align*}
with terminal condition $h^{i,j}(T,S,q) = q(S-AQ)$ and natural Dirichlet boundary conditions for $S \in \partial \mathcal D _{i,j}$ :
\begin{align*}
h^{i,j}(t,S,q) = &\ h^{i+1,j}(t,S,q) \mathbf{1}_{\{S-\alpha^a i=(\frac{1}{2}+\eta^a)\alpha^a,\ S-\alpha^b j<(\frac{1}{2}+\eta^b)\alpha^b\}}\\
& + h^{i,j+1}(t,S,q) \mathbf{1}_{\{S-\alpha^a i<(\frac{1}{2}+\eta^a)\alpha^a,\ S-\alpha^b j=(\frac{1}{2}+\eta^b)\alpha^b\}} \\
& + h^{i+1,,j+1}(t,S,q) \mathbf{1}_{\{S-\alpha^a i=(\frac{1}{2}+\eta^a)\alpha^a,\ S-\alpha^b j=(\frac{1}{2}+\eta^b)\alpha^b\}}\\
& + h^{i-1,j}(t,S,q) \mathbf{1}_{\{S-\alpha^a i=-(\frac{1}{2}+\eta^a)\alpha^a,\ S-\alpha^b j>-(\frac{1}{2}+\eta^b)\alpha^b\}}\\
& + h^{i,j-1}(t,S,q) \mathbf{1}_{\{S-\alpha^a i>-(\frac{1}{2}+\eta^a)\alpha^a,\ S-\alpha^b j=-(\frac{1}{2}+\eta^b)\alpha^b\}}\\
& + h^{i-1,j-1}(t,S,q) \mathbf{1}_{\{S-\alpha^a i=-(\frac{1}{2}+\eta^a)\alpha^a,\ S-\alpha^b j=-(\frac{1}{2}+\eta^b)\alpha^b\}}.
\end{align*}
From this, we derive the optimal controls of the market maker as 
\begin{align*}
    & \ell^{\star a}(t,i,j,S,q)=\mathbf{1}_{\{\alpha^a i+h^{i,j}(t,S,q-1)-h^{i,j}(t,S,q)>0 \}},\\
    & \ell^{\star b}(t,i,j,S,q)=\mathbf{1}_{\{-\alpha^b j+h^{i,j}(t,S,q+1)-h^{i,j}(t,S,q)>0 \}}.
\end{align*}

The practical interest of Theorem \ref{viscoMM} is that it allows us to compute the value function and optimal controls based on a finite difference scheme. Examples of computations of the value function are given in Section \ref{sec_numerical_results} and Appendix \ref{app::exemp}.
Having described the problem of the market maker, we now turn to the optimization problem of the platform.

\subsection{The platform's problem}

The market maker acts on a platform whose goal is to maximize the number of market orders on $[0,T]$. The intensities of arrival of market orders are functions of $\ell^a,\ell^b$, which are themselves functions of $\alpha^a,\alpha^b$. We assume that the platform is risk-neutral and earns a fixed taker cost $c>0$ for each market order.\footnote{More complex fee schedules can be handled in this framework. We can for example add a component which is proportional to the amount of cash traded.} Therefore its optimization problem is defined as
\begin{align*}
    \sup_{(\alpha^a,\alpha^b)\in \mathbb{R}_+^2}\mathbb{E}^{l^{\star a},l^{\star b}}\big[X_T^p\big],
\end{align*}
given the optimal controls $(l^{\star a},l^{\star b})$ of the market maker and $X_t^p=c(N_t^a+N_t^b)$. \\

It is easy to observe that this problem boils down to maximizing the function $v$ defined below over $\mathbb{R}^2_+$:
\begin{align*}
    v(\alpha^a, \alpha^b) := \mathbb{E} \left[\int_0^T c\lambda \left\{\frac{\ell^{\star a}(t,S^a_t, S^b_t, S_t, q_t)}{1+(\kappa\alpha^a)^2} + \frac{\ell^{\star b}(t,S^a_t, S^b_t, S_t, q_t)}{1+(\kappa\alpha^b)^2} \right\}\mathrm{d}t \right].
\end{align*}
Here we clearly see the tradeoff of the platform. A small tick size $\alpha^a$ increase the term $(1+(\kappa\alpha^a)^2)^{-1}$. This is because it attracts more buy market orders. However, the optimal control $\ell^{\star, a}$ is more often equal to zero: the gain of the market maker may be too small if he quotes at the price $S^a$, therefore he regularly sets $\ell^{\star,a}=0$. The problem is similar on the bid side. On the other hand, a large tick size increases the gain of the market maker if a transaction occurs, but decreases the number of market orders sent by market takers, hence decreasing the trading volume. \\

We study numerically this problem in the next section by computing the value of $v$ on a two dimensional grid and finding its maximum.

\section{Numerical results}\label{sec_numerical_results}
In this section, we show from numerical experiments the benefits of side-specific tick values in terms of increase of their value function for both the market maker and the platform. Also, we fix reference values $\eta_0$ and $\alpha_0$. From them, to choose the parameter $\eta^i$ associated to a given tick size $\alpha^i$ we use a result from \cite{dayri2015large} which gives the new value of the parameter $\eta^i$ in case of a change of tick size from $\alpha_0$ to $\alpha^i$. This formula writes 
\begin{align}\label{eq::eta}
    \eta^i = \eta_0\sqrt{\frac{\alpha_0}{\alpha^i}}.
\end{align}
In the following, we only consider values of $\alpha^a$ and $\alpha^b$ such that the underlying remains a large tick asset both on the bid and ask sides, that is $\eta^a\leq \frac{1}{2},\eta^b\leq \frac{1}{2}$.  \\

For the first experiments, we set $T=40 s$, $\overline{q}=5$, $\sigma = 0.01 s^{-1}$, $A = 0.1,\kappa=10,\phi=0.005,\lambda=4$, $\eta_0=0.3$ and $\alpha_0=0.01$ which correspond to reasonable values to model a liquid asset. To remain in the large tick regime, we investigate values of $\alpha^i$ satisfying $0.0045\leq \alpha^i\leq 0.05$ for $i=a,b$.  

\subsection{Similar tick values on both sides}

In this section we investigate the case where $\alpha^a=\alpha^b$. We plot in Figure \ref{fig::sym_2} the value functions of the market maker and the exchange, respectively $h$ and $v$, for various values of $\alpha = \alpha^a=\alpha^b$. We fix the efficient price $S=10.5$, the inventory $q=0$ and we only consider values of $\alpha$ so that $0.5/\alpha\in\mathbb{N}$.

\begin{figure}[!h]
\centering
\includegraphics[width=17cm]{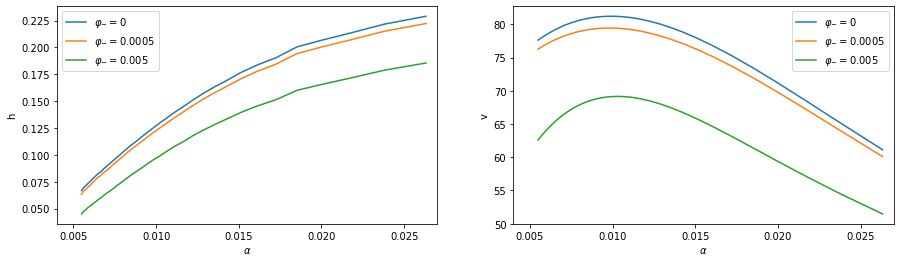}
\caption{Value function $h$ (on the left) and $v$ (on the right) for $\phi_-=0$ in blue, $\phi_-=0.0005$ in orange, $\phi_-=0.005$ in green, as a function of $\alpha = \alpha^a=\alpha^b$.}
\label{fig::sym_2}
\end{figure}

When $\phi_- = 0$, the value of the exchange reaches its maximum at $\alpha\simeq 0.012$. An increase of $\phi_-$ leads to a reduction of the number of transactions. However the optimal tick value for the exchange is not significantly modified. \\

The optimal tick value for the market maker is larger than that of the exchange. This is because the exchange is only interested in attracting orders while the market maker's gain per trade (not taking into account the inventory risk) is linear with respect to the tick value. The trade-off of the exchange is the following: on the one hand, he would like to implement a quite small tick value (to attract market orders) but on the other hand, he must ensure a reasonable presence of market maker.\\

When $\phi_-$ increases, the value function of the market maker decreases, for all tick values. This is no surprise since $\phi_-$ corresponds to an inventory penalization, hence reducing the market maker's PnL. \\

In Figure \ref{fig::sym_diff_h}, we substract the value function when $\phi_-=0$ to the other value functions displayed in Figure \ref{fig::sym_2}. We remark that for the market maker, the larger the tick the more significant the penalization of short inventory in terms of value function. We observe the opposite phenomenon for the exchange: the difference is essentially slightly increasing with respect to $\alpha$. In particular, we see a quite strong impact of the penalization on the value function of the exchange when the tick size is small.

\begin{figure}[!h]
\centering
\includegraphics[width=17cm]{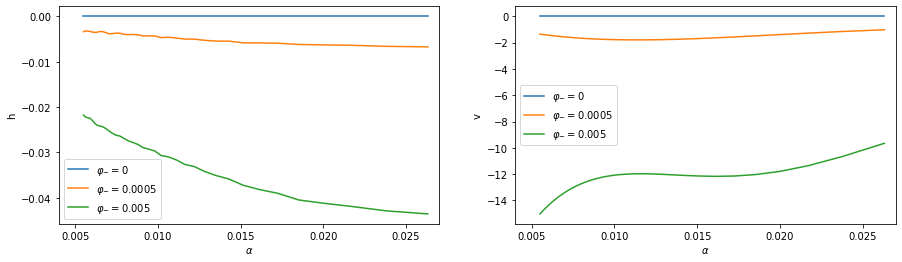}
\caption{Variation of the functions $h$ and $v$ (difference between $\phi_-=0$ in blue, $\phi_-=0.0005$ in orange, $ \phi_-=0.005$ in green, and $\phi_- = 0$ as a function of $\alpha = \alpha^a=\alpha^b$. }
\label{fig::sym_diff_h}
\end{figure}
We now study the case of side-specific tick values.

\subsection{Side-specific tick values: additional opportunities for the market maker}
We set $\alpha^b = 0.0124$ (optimal tick size in the non side-specific case) and let $\alpha^a$ vary. We plot the value functions of the market maker and the exchange in Figure \ref{fig::asym_2}.
\begin{figure}[!h]
\centering
\includegraphics[width=17cm]{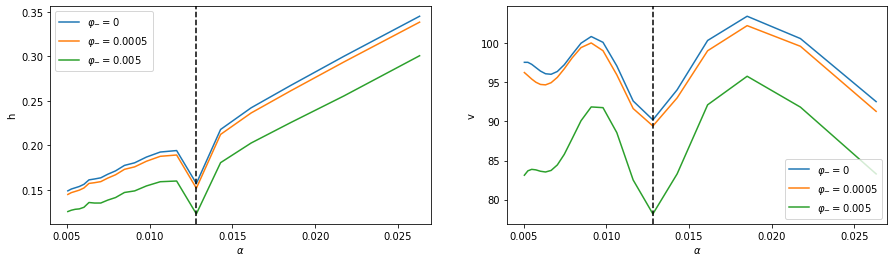}
\caption{Value function $h$ (on the left) and $v$ (on the right) as functions of $\alpha^a$, for $\alpha^b  = 0.0124$, for $\phi_-=0$ in blue, $\phi_-=0.0005$ in orange, $\phi_-=0.005$ in green.}
\label{fig::asym_2}
\end{figure}
Again we observe that both value functions are decreasing with respect to $\phi_-$. From the point of view of the market maker, having non side-specific tick values is sub-optimal, even in the case $\phi_-=0$. This is because when the two tick values are different, it is possible for $S^a$ to be greater than $S^b$ and orders to arrive with the same intensities on both sides: the market maker can collect the spread. It is not possible in the non side-specific case, where the market maker can only pocket the spread from buy and sell orders at two different times. Side-specific tick values are also clearly beneficial for the exchange. The transaction flow increases for $\alpha^a>\alpha^b$ because of the good liquidity provided by the market maker, and for $\alpha^a<\alpha^b$ because of the high number of incoming market orders.

\begin{remark}
Remark that with shifted grids (same tick values on both sides but with a grid shifted compared to the other), those additional opportunities for the market maker would remain. In section \ref{subsec::2d}, we will see however, that, from the point of view of the exchange, side-specific tick values are much more interesting.
\end{remark}

\subsection{Side-specific tick values: effect of $\phi_-$}
\label{subsec::2d}
We plot the two-dimensional value functions of the market maker and the exchange for side-specific tick values.\\

First we take $\phi_-=0$ in Figure \ref{fig::29sym_final}. We note that the opportunity for the market maker mentioned above remains present for all tick values and that the value functions are symmetric around the axis $\alpha^b=\alpha^a$ (side-specific tick values are preferred). Furthermore, we see that the exchange prefers smaller tick values than the market maker. The optimal values for the exchange lie on an anti-diagonal which goes from $(\alpha^a=0.0045,\alpha^b=0.025)$ to $(\alpha^a=0.025,\alpha^b=0.0045)$. On this line the number of transactions varies little. It seems however that the optimum is on the edges of the zone in which the asset remains large tick: the two couples $(\alpha^a,\alpha^b)$ mentioned above.\\

If the tick values are too large the intensities of the market orders become too small and the number of transactions diminishes. If both ticks are too small, the market maker does not trade much because the gain per trade becomes too little compared to the inventory cost (recall that the intensity of market orders is upper bounded). However, the case where one tick is quite small and the other is large is suitable for the market maker: for example, if $\alpha^a<\alpha^b$ his strategy is to be long and liquidate his long position fast if needed thanks to the small value of $\alpha^a$ which ensures a large number of incoming market orders. This explains why the optimal tick values given by the exchange are side-specific and symmetric with respect to the axis $\alpha^a=\alpha^b$. More precisely, the choice of ticks $(\alpha^a=0.0045,\alpha^b=0.025)$ or $(\alpha^a=0.025,\alpha^b=0.0045)$ seems optimal.

\begin{figure}[!h]
\centering
\includegraphics[width=17cm]{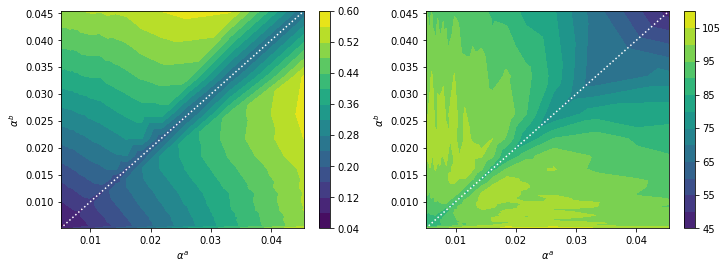}
\caption{Value function $h$ (on the left) and $v$ (on the right) as functions of $\alpha^a$ and $\alpha^b$, for $ \phi_-=0$. }
\label{fig::29sym_final}
\end{figure}

We now plot in Figure \ref{fig::29asym0001_final} the value function for $\phi_- = 0.005$. This non-zero parameter implies a clear decrease of the value function of the market maker and the reduction of the number of transactions. An important remark is that the value functions are no-longer symmetric around the axis $\alpha^b=\alpha^a$.\\

\begin{figure}[!h]
\centering
\includegraphics[width=17cm]{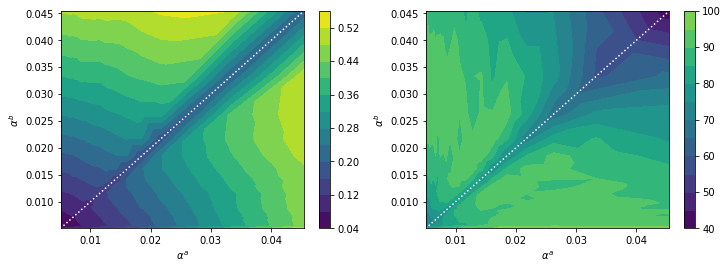}
\caption{Value function $h$ (on the left) and $v$ (on the right) as functions of $\alpha^a$ and $\alpha^b$, for $\phi_-=0.005$. }
\label{fig::29asym0001_final}
\end{figure}

For clarity we plot in Figure \ref{fig::29asym0001_diff_final} the difference of the value functions when $\phi_-=0.005$ and when $\phi_-=0$ as a function of $\alpha^a,\alpha^b$. 
\begin{figure}[!h]
\centering
\includegraphics[width=17cm]{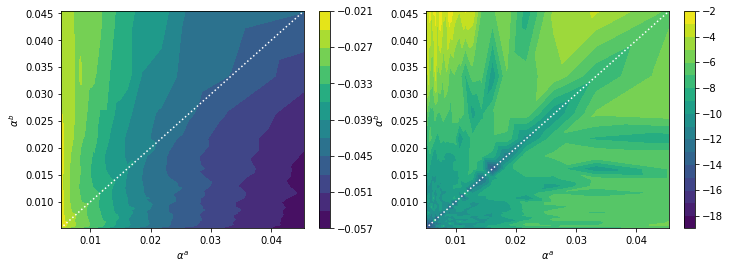}
\caption{Difference between the value function $h$ (on the left) and $v$ (on the right) as functions of $\alpha^a$ and $\alpha^b$, between the case $\phi_-=0.005$ and the case $\phi_-=0$.}
\label{fig::29asym0001_diff_final}
\end{figure}
We see that the added component is not symmetric regarding to the axis $\alpha^b=\alpha^a$ and both the market maker and the exchange tend to prefer the case $\alpha^b>\alpha^a$. It is particularly clear for the market maker's problem where the difference between the values at $(\alpha^a=0.0045,\alpha^b=0.025)$ and $(\alpha^a=0.025,\alpha^b=0.0045)$ is approximately $0.03$ which is roughly $10\%$ of the value function.
Indeed, as explained above, having $\alpha^b$ quite large and $\alpha^a$ rather small essentially ensures that the market maker can maintain a positive inventory all along the trading trajectory: attractive PnL for incoming buy orders and possibility to quickly reduce a positive inventory. \\

The exchange is also more satisfied by the choice $\alpha^b>\alpha^a$. To see that more clearly, we fix $\alpha^a=0.0045$ and plot in Figure \ref{fig::compensation} the value functions $h$ and $v$, as functions of $\alpha^b$, for different values of $\phi$.

\begin{figure}[!h]
\centering
\includegraphics[width=17cm]{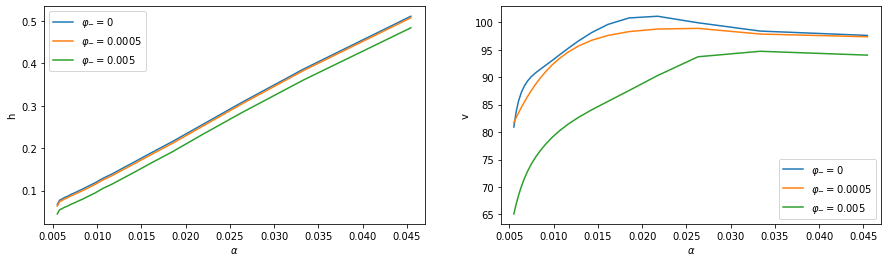}
\caption{Value functions $h$ and $v$ for $\alpha^a=0.0045$, as functions of $\alpha^b$, for different values of $\phi$.}
\label{fig::compensation}
\end{figure}

The value function of the market maker is increasing in $\alpha^b$. This is the same phenomenon as already observed in Figure \ref{fig::29asym0001_final}. The value of the exchange has a maximum which is increasing in $\phi_-$: as the penalization gets more side-specific, the optimal tick values displayed by the exchange become more asymmetric. Indeed, for $\phi_-=0$, the optimum is reached for $\alpha^b\simeq 0.024$, while for $\phi_-=0.005$, it is obtained for $\alpha^b\simeq 0.034$. Note that a relevant tick value set by the exchange can compensate for his loss of value function due to an increase of $\phi_-$. By choosing a new tick size optimally when going from $\phi_-=0$ to $\phi_-=0.005$, the loss in value function is of $7\%$ only. Keeping $\alpha^b= 0.024$ would lead to a loss of $15\%$. Note that the compensation can be total for the market maker (and even exceeds the loss) but is only partial for the exchange. 

\section{Conclusion}
A suitable choice of tick values by the exchange is a subtle equilibrium. If the platform imposes the same tick value on the bid and ask sides, it has to be sufficiently large to ensure significant PnL per trade for the market maker and therefore good liquidity provision, and sufficiently small to attract market orders from market takers. When allowing for side-specific tick values with no constraint on short inventory, the optimal tick values for the exchange are of the form $(\alpha^\star_1,\alpha^\star_2)$ or symmetrically $(\alpha^\star_2,\alpha^\star_1)$ with $\alpha^\star_1<\alpha^\star_2$. In this case, the market maker can take advantage from additional trading opportunities and increase his activity. The exchange benefits from this situation because of the higher number of trades on his platform. Moreover, when there is a penalty for short inventory positions of the market maker, there is only one optimal couple of tick values. In this case, the market maker and subsequently the exchange prefer $\alpha^b>\alpha^a$ and the difference between $\alpha^a$ and $\alpha^b$ at the optimum becomes larger.
Finally, note that side-specific tick values could have subtle consequences in a multi-platform setting. This issue is left for further study, as well as the situation where market takers are more strategic in their execution.

\appendix

\section{Appendix}

\subsection{Proof of Proposition \ref{prop_continuity_verif}}\label{proof_continuity_h}

First we prove the continuity of $h$ on $\Dc\times[0,T)$.\\

Let $q\in\Qc$, $t_1\in[0,T)$, $(s^a, s^b, s_{1})\in\Dc$. Note that $\{s\in\R, (s^a,s^b,s)\in\Dc\}$ is an open interval containing $s_1$, which we denote by $(s^{\leftarrow},s^{\to})$. If the process $S_t$ starts from a point $s\in(s^{\leftarrow},s^{\to})$ with $S^a_t=s^a$ and $S^b_t=s^b$, $S^a_t$ and $S^b_t$ will not jump as long as $S_t$ stays in $(s^{\leftarrow},s^{\to})$. We will prove that the function $(t,s)\in[0,T)\times(s^{\leftarrow},s^{\to})\mapsto h(t,s^a,s^b,s,q)$ is continuous at $(t_1,s_1)$.\\

We fix $\eta>0$. There is a ball with positive diameter $\Bc$ in $[0,T)\times(s^{\leftarrow},s^{\to})$ centered on $(t_1,s_{1})$ and some $\epsilon>0$ such that, if $(t_2,s_{2})\in\Bc$, then
\begin{align}
\label{ineg_list}
    & \E[\tau^1-t_1|S_{t_1}=s_{1}]<\eta, \quad \E[\tau^2-t_2|S_{t_2}=s_{2}]<\eta, \\
    & \P[\tau^1<T|S_{t_1}=s_{1}]>1-\eta, \quad \P[\tau^2<T|S_{t_2}=s_{2}]>1-\eta,
\label{ineg_list2}
\end{align}
and\footnote{These inequalities can be attained independently of the control $\ell$ as $S$ is independent from $Q$.}
\begin{align}
\label{ineg_q}
\begin{split}
    &\underset{\ell\in\Lc}{\inf } \P[\underset{t_1\leq s \leq \tau^1}{\inf } Q_{s} = \underset{t_1\leq s \leq \tau^1}{\sup} Q_{s} = q|S_{t_1}=s_{1},Q_{t_1} = q]>1-\eta, \\ &\underset{\ell\in\Lc}{\inf}\P[\underset{t_2\leq s \leq \tau^2}{\inf} Q_{s} = \underset{t_2\leq s \leq \tau^2}{\sup} Q_{s} = q|S_{t_2}=s_{2},Q_{t_2} = q]>1-\eta,
\end{split}
\end{align}
where 
we write
\begin{align*}
 &\tau^1=T\wedge\inf\{t\geq t_1, S_t^{t_1,s_{1}} = (s_{1}\vee s_{2})+\epsilon \text{ or } S_t^{t_1,s_{1}} = (s_{1}\wedge s_{2})-\epsilon\}, \\
 &\tau^2=T\wedge\inf\{t\geq t_2, S_t^{t_2,s_{2}} = (s_{1}\vee s_{2})+\epsilon \text{ or } S_t^{t_2,s_{2}} = (s_{1}\wedge s_{2})-\epsilon\}.
\end{align*}
The quantities $\tau^1$ and $\tau^2$ are stopping times such that $t_1\leq\tau^1\leq T$ a.s. and $t_2\leq\tau^2\leq T$ a.s.
We impose 
\begin{align*}
s^{\leftarrow}<(s_{1}\wedge s_{2})-\epsilon<s_{1}<(s_{1}\vee s_{2})+\epsilon< s^{\to}, \quad s^{\leftarrow}<(s_{1}\wedge s_{2})-\epsilon<s_{2}<(s_{1}\vee s_{2})+\epsilon< s^{\to}
\end{align*}
for any $(t_2,s_{2})\in\Bc$ by taking a smaller ball $\Bc$ and a smaller $\epsilon$ if necessary. In particular, this tells us that if $(S_{t_1}, S^a_{t_1},S^b_{t_1}) = (s_{1},s^a,s^b)$, $S^a_t$ does not jump between $t_1$ and $\tau^1$. Similarly, if $(S_{t_2}, S^a_{t_2},S^b_{t_2}) = (s_{2},s^a,s^b)$, $S^b_t$ does not jump between between $t_2$ and $\tau^2$ .\\

Let some arbitrary $(t_2,s_{2})\in\Bc$ and $\tau^1$ and $\tau^2$ the associated stopping times. Using the dynamic programming principle, we obtain
\begin{align*}
    h(t_1,s^a, s^b, s_{1},q) = \underset{\ell\in\Lc}{\sup}\E\bigg[ h(\tau^1\!,\!S^a_{\tau^1}\!, \!S^b_{\tau^1}\!, \! S_{\tau^1}\!,\!Q_{\tau^1})\!+\! \int_{t_1}^{\tau^1} \!\!\!\left\{- \phi Q_t^2  - \phi_- Q_t^2\mathbf{1}_{Q_t<0} \right\} dt\Big| S_{t_1} = s_{1}, S^a_{t_1} = s^a, S^b_{t_1} = s^b, Q_{t_1}=q\bigg].
\end{align*}
This can be rewritten as
\begin{align*}
    h(t_1,s^a, s^b, s_{1},q)& = \underset{\ell\in\Lc}{\sup}\E\bigg[\sum\limits_{\bar q\in\Qc}\big( h(\tau^1,s^a, s^b, S_{\tau^1},\bar q)\mathbf{1}_{\{Q_{\tau^1}=\bar q\}}\!+\! \int_{t_1}^{\tau^1}\!\! \left\{- \phi Q_t^2  - \phi_- Q_t^2\mathbf{1}_{Q_t<0} \right\}\!\mathbf{1}_{\{Q_{t}=\bar q\}}\big) dt\big| S_{t_1} \!=\! s_{1}, Q_{t_1}\!=\! q\bigg].
\end{align*}
Recalling that $h$ is bounded, we deduce by \ref{ineg_q} that
\begin{align*}
    \bigg|h(t_1,s^a, s^b, s_{1},q)- \underset{\ell\in\Lc}{\sup}\E\Big[ h(\tau^1,s^a, s^b, S_{\tau^1},q)+ \int_{t_1}^{\tau^1} \left\{- \phi q^2  - \phi_- (q)^2_- \right\} dt\big| S_{t_1} = s_{1}, Q_{t_1}=q\Big]\bigg|\leq C\eta
\end{align*}
for a constant $C$, independent from $(t_1,t_2,s_{1},s_{2})$, and $(q)_-= q^2\mathbf{1}_{q<0}$.
The expectation above does not depend on the control $\ell$, hence we drop the supremum and fix an arbitrary control $\ell=0$. We denote by $\E^0$ the expectation under the probability measure given by this control. The expectation neither depends on the process $Q_t$, so we drop the conditioning with respect to $Q_{t_1}$.\\

This leads to
\begin{align*}
    \bigg|h(t_1,s^a, s^b, s_{1},q)- \E^0\Big[ h(\tau^1,s^a, s^b, S_{\tau^1},q)+ \int_{t_1}^{\tau^1} \left\{- \phi q^2  - \phi_- (q)^2_- \right\} dt\big| S_{t_1} = s_{1}\Big]\bigg|\leq C\eta.
\end{align*}
Similarly, starting from $(t_2,s^a, s^b, s_{2},q)$ with $(t_2,s_2)\in\Bc$, we get
\begin{align*}
    \bigg|h(t_2,s^a, s^b, s_{2},q)- \E^0\Big[ h(\tau^2,s^a, s^b, S_{\tau^2},q)+ \int_{t_2}^{\tau^2} \left\{- \phi q^2  - \phi_- (q)^2_- \right\} dt\big| S_{t_2} = s_{2}\Big]\bigg|\leq C\eta,
\end{align*}
and we deduce that
\begin{align}
\begin{split}
\label{ineq_decomp}
    &|h(t_1,s^a, s^b, s_{1},q)- h(t_2,s^a, s^b, s_{2},q)|\leq \bigg|\E^0\Big[ h(\tau^1,s^a, s^b, S_{\tau^1},q)+ \int_{t_1}^{\tau^1} \left\{- \phi q^2  - \phi_- (q)^2_- \right\} dt| S_{t_1} = s_{1}\Big]\\
   & -\E^0\Big[ h(\tau^2,s^a, s^b, S_{\tau^2},q)+ \int_{t_2}^{\tau^2} \left\{- \phi q^2  - \phi_- (q)^2_- \right\} dt| S_{t_2} = s_{2}\Big]\bigg| +2C\eta\\
    &\leq \bigg|\E^0\Big[ h(\tau^1,s^a, s^b, S_{\tau^1},q)| S_{t_1} = s_{1}\Big]-\E^0\Big[ h(\tau^2,s^a, s^b, S_{\tau^2},q)| S_{t_2} = s_{2}\Big]\bigg|\\
    &+\Big|\phi q^2  - \phi_- (q)^2_-\Big|\Big(\E^0\big[\tau^1-t_1| S_{t_1} = s_{1}\big]+\E^0\big[\tau^2-t_2| S_{t_2} = s_{2}\big]\Big)+2C\eta.
\end{split}
\end{align}

Using \eqref{ineg_list}, we get
\begin{align}
\label{ineq_taus}
    \Big|\E^0\big[\tau^1-t_1| S_{t_1} = s_{1}\big]+E^0\big[\tau^2-t_2| S_{t_2} = s_{2}\big]\Big|<2\eta.
\end{align}

Also, the conditional laws 
\begin{align*}
    & \big(\tau^1| S_{t_1} = s_{1}, S_{\tau^1} = (s_{1}\vee s_{2})+\epsilon, \tau^1<T\big), \quad \big(\tau^1| S_{t_1} = s_{1}, S_{\tau^1} = (s_{1}\wedge s_{2})-\epsilon, \tau^1<T\big),  \\
    & \big(\tau^2| S_{t_2} = s_{2}, S_{\tau^2} = (s_{1}\vee s_{2})+\epsilon, \tau^2<T\big), \quad \big(\tau^2| S_{t_2} = s_{2}, S_{\tau^2} = (s_{1}\wedge s_{2})-\epsilon, \tau^2<T\big),
\end{align*}
have bounded continuous densities, which we denote by $f^{1,+}$, $f^{1,-}$, $f^{2,+}$ and $f^{2,-}$ respectively (see for example \cite{borodinhandbook}, Formula $3.0.6$).
So, by decomposing the first term in \eqref{ineq_decomp} with respect to the values of $S_{\tau^1}$ and $S_{\tau^2}$, we can write
\begin{align}\label{eq_continuity_D_1}
\begin{split}
 &\bigg|\E^0\Big[ h(\tau^1,s^a, s^b, S_{\tau^1},q)| S_{t_1} = s_{1}\Big]-\E^0\Big[ h(\tau^2,s^a, s^b, S_{\tau^2},q)| S_{t_2} = s_{2}\Big]\bigg|\\
    \leq &\Big|\sum\limits_{j\in \{+,-\}}\int_0^T h(t,s^a, s^b, s_j,q)(f^{1,j}(t)\P^0[S_{\tau^1} = s_j, \tau^1<T|S_{t_1} = s_{1}]-f^{2,j}(t)\P^0[S_{\tau^2} = s_j, \tau^2<T|S_{t_2} = s_{2}])dt\Big| \\
    &+\Big|\E^0\big[ h(\tau^1,s^a, s^b, S_{\tau^1},q)\mathbf{1}_{\{S_{\tau^1} \neq s_+, S_{\tau^1} \neq s_-\}\cup\{\tau^1=T\}} | S_{t_1} = s_{1}\big]\Big|\\
    &+\Big|\E^0\big[ h(\tau^2,s^a, s^b, S_{\tau^2},q)\mathbf{1}_{\{S_{\tau^2} \neq s_+, S_{\tau^2} \neq s_-\}\cup\{\tau^2=T\}} | S_{t_2} = s_{2}\big]\Big|   
\end{split}
\end{align}
where $s_+ = s_{1}\vee s_{2}+\epsilon$ and $s_- = s_{1}\wedge s_{2}-\epsilon$. Remark that the event $S_{\tau^1} \neq s_+, S_{\tau^1} \neq s_- , S_{t_1} = s_{1}$ happens only if $\tau^1=T$ so that $\P^0[\{S_{\tau^1} \neq s_+, S_{\tau^1} \neq s_-\}\cup\{\tau^1=T\} | S_{t_1} = s_{1}]<\eta$ by \eqref{ineg_list2}. Similarly $\P^0[\{S_{\tau^2} \neq s_+, S_{\tau^2} \neq s_-\}\cup\{\tau^2=T\} | S_{t_2} = s_{2}]<\eta$ by \eqref{ineg_list2}. As a consequence, using again \eqref{ineq_decomp}, \eqref{ineq_taus} and the fact that $h$ is bounded, we get
\begin{align*}
    &\big|h(t_1,s^a, s^b, s_{1},q)- h(t_2,s^a, s^b, s_{2},q)\big|\\
    & \leq \Big|\sum\limits_{j\in \{+,-\}}\int h(t,s^a, s^b, s_j,q)(f^{1,j}(t)\P^0[S_{\tau^1} = s_j, \tau^1<T|S_{t_1} = s_{1}]-f^{2,j}(t)\P^0[S_{\tau^2} = s_j, \tau^2<T|S_{t_2} = s_{2}])dt\Big|\\
    &+(2|\phi q^2  - \phi_- (q)^2_-|+4C)\eta.
\end{align*}

Recall that the  $f^{1,+}$, $f^{1,-}$, $f^{2,+}$ and $f^{2,-}$ depend on $s_2$ and $t_2$. We have 
\begin{align*}
    &|\P^0[S_{\tau^2} = s_+, \tau^2<T|S_{t_2} = s_{2}]f^{2,+}-\P^0[S_{\tau^1} = s_+, \tau^1<T|S_{t_1} = s_{1}]f^{1,+}| \underset{(t_2,s_{2})\to (t_1,s_{1})}{\to} 0, \\
    &|\P^0[S_{\tau^2} = s_-, \tau^2<T|S_{t_2} = s_{2}]f^{2,-}-\P^0[S_{\tau^1} = s_-, \tau^1<T|S_{t_1} = s_{1}]f^{1,-}| \underset{(t_2,s_{2})\to (t_1,s_{1})}{\to} 0
\end{align*} 
point-wise on $[0,T]$ directly by \cite{borodinhandbook} Formula 3.0.6 and Appendix 11. Having fixed $\epsilon$ and using again \cite{borodinhandbook} Formula 3.0.6 and Appendix 11, we see that the above functions are uniformly bounded with respect to $(s_2, t_2)\in\Bc$. So, using that $h$ is bounded, we can apply the dominated convergence theorem to deduce that 
\begin{align*}
    &\Big|\sum\limits_{j\in \{+,-\}}\int_0^T h(t,s^a, s^b, s_j,q)(f^{1,j}(t)\P^0[S_{\tau^1} = s_j, \tau^1<T|S_{t_1} = s_{1}]-f^{2,j}(t)\P^0[S_{\tau^2} = s_j, \tau^2<T|S_{t_2} = s_{2}])dt\Big|\\
    &\underset{(t_2,s_{2})\to (t_1,s_{1})}{\to} 0.
\end{align*}

Thus we have shown that $h$ is continuous at the point $(t_1,S^a,S^b,s_{1},q)$. The case $t_1=T$ is treated the same way.\\

The continuity conditions can be proved using the same lines: fixing $q\in\Qc$, $t_1\in[0,T)$ and $(S^a,S^b,s_{1})\in\partial\Dc$, choosing $(t_2,s_2)$ close enough to $(t_1,s_1)$ and applying the dynamic programming principle between $t_1$ and $\tau^1$, and $t_2$ and $\tau^2$, for $\tau^1$ and $\tau^2$ two well-chosen stopping times (for example 
\begin{align*}
    \tau^1=T\wedge\inf\big\{t>t_1, S_t=s_{1}+\epsilon\text{ or }S_t = \frac{s^\leftarrow\wedge s_{1}}{2}\big\}, \quad \tau^2=T\wedge\inf\big\{t>t_2, S_t=s_{1}+\epsilon\text{ or }S_t = \frac{s^\leftarrow\wedge s_{1}}{2}\big\}.
\end{align*}
with $\epsilon>0$ small enough, for a boundary inducing an upward jump).

\subsection{Proof of Theorem \ref{viscoMM}}\label{Proof visco MM}

We first prove that the value function of the market maker's problem is indeed a viscosity solution of \eqref{HJBMM}. 

\begin{proposition}
The value function $h$ is a continuous viscosity solution on $[0,T) \times \mathcal D \times \mathcal Q$ of \eqref{HJBMM}. Furthermore, $h(T,S^a, S^b, S, q) = q(S - Aq)$ for all $(S^a, S^b, S, q) \in \mathcal D \times \mathcal Q,$ and 
\begin{align*}
h(t,S^a, S^b, S, q) = &  \mathbf{1}_{\{S-S^a=(\frac{1}{2}+\eta^a)\alpha^a,\ S-S^b<(\frac{1}{2}+\eta^b)\alpha^b\}} h(t,S^a+\alpha^a,S^b,S,q)\\
     {} + & \mathbf{1}_{\{S-S^a<(\frac{1}{2}+\eta^a)\alpha^a,\ S-S^b=(\frac{1}{2}+\eta^b)\alpha^b\}}h(t,S^a,S^b+\alpha^b,S,q)\\
     {} + &\mathbf{1}_{\{S-S^a=(\frac{1}{2}+\eta^a)\alpha^a,\ S-S^b=(\frac{1}{2}+\eta^b)\alpha^b\}}h(t,S^a+\alpha^a,S^b+\alpha^b,S,q)\\
     {} + &\mathbf{1}_{\{S-S^a=-(\frac{1}{2}+\eta^a)\alpha^a,\ S-S^b>-(\frac{1}{2}+\eta^b)\alpha^b\}}h(t,S^a-\alpha^a,S^b,S,q)\\ 
     {} + &\mathbf{1}_{\{S-S^a>-(\frac{1}{2}+\eta^a)\alpha^a,\ S-S^b=-(\frac{1}{2}+\eta^b)\alpha^b\}}h(t,S^a,S^b-\alpha^b,S,q)\nonumber\\
     {} + & \mathbf{1}_{\{S-S^a=-(\frac{1}{2}+\eta^a)\alpha^a,\ S-S^b=-(\frac{1}{2}+\eta^b)\alpha^b\}}h(t,S^a-\alpha^a,S^b-\alpha^b,S,q),
\end{align*}
for all $(t,S^a, S^b, S, q) \in [0,T) \times \partial \mathcal D \times \mathcal Q .$
\end{proposition}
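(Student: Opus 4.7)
My plan is to establish the statement in three pieces: the terminal condition, the boundary relations on $\partial\mathcal{D}$, and the viscosity solution property on the interior $[0,T)\times\mathcal{D}\times\mathcal{Q}$. The terminal condition $h(T,S^a,S^b,S,q)=q(S-Aq)$ is immediate from the definition \eqref{function_h}, and the six boundary relations on $\partial\mathcal{D}$ are exactly the continuity conditions \eqref{Bcond} written as a sum over mutually exclusive indicator cases, which are granted by Proposition \ref{prop_continuity_verif}. So the real work is the viscosity solution property.

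For this I would first establish the dynamic programming principle: for every stopping time $\tau$ valued in $[t,T]$,
\begin{align*}
h(t,S^a,S^b,S,q) = \sup_{\ell\in\mathcal{L}_t}\mathbb{E}\Bigl[&h(\tau,S^a_\tau,S^b_\tau,S_\tau,Q_\tau) \\
&+ \int_t^\tau\bigl\{S^a_s\lambda(\ell^a_s,Q_s)-S^b_s\lambda(\ell^b_s,Q_s)-\phi Q_s^2-\phi_- Q_s^2\mathbf{1}_{Q_s<0}\bigr\}\mathrm{d}s\Bigr].
\end{align*}
This follows from standard arguments for controlled jump-diffusions: continuity of $h$ provided by Proposition \ref{prop_continuity_verif} is precisely what makes the measurability and pasting of $\varepsilon$-optimal controls work.

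Next, I would fix $(t_0,S_0^a,S_0^b,S_0,q_0)\in[0,T)\times\mathcal{D}\times\mathcal{Q}$ and a test function $\varphi\in C^{1,2}$. For the \emph{supersolution} property, assume $h-\varphi$ attains a local minimum at $(t_0,\ldots)$ with $h(t_0,\ldots)=\varphi(t_0,\ldots)$ and argue by contradiction: if the HJB operator of \eqref{HJBMM} applied to $\varphi$ at $(t_0,\ldots)$ were strictly positive, pick a pair $(\ell_\star^a,\ell_\star^b)\in\{0,1\}^2$ attaining both maxima, choose an open neighborhood $V\subset\mathcal{D}$ of the starting point on which the positivity persists, and set $\tau_h=(t_0+h)\wedge\tau_V$ with $\tau_V$ the first exit time of $(S^a_s,S^b_s,S_s)$ from $V$. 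Because $V\subset\mathcal{D}$, $S^a_\cdot$ and $S^b_\cdot$ are constant on $[t_0,\tau_h]$, so applying Itô's formula to $\varphi(s,S^a,S^b,S_s,Q_s)$ for the constant control $(\ell_\star^a,\ell_\star^b)$ and combining with the DPP ($\geq$ direction) yields the required contradiction after dividing by $h$ and letting $h\downarrow 0$. For the \emph{subsolution} property, use the symmetric argument: after the standard perturbation of $\varphi$ by a strict fourth-power penalty to force a strict local max, combine an $\varepsilon$-optimal control from the DPP ($\leq$ direction) with Itô's formula to contradict the failure of the HJB inequality.

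The main obstacle is the careful handling of the jump structure of $Q$ together with the localization on $\mathcal{D}$. The intensities $\lambda(\ell^i_s,Q_s)=\lambda\ell^i_s(1+(\kappa\alpha^i)^2)^{-1}\mathbf{1}_{\phi(i)Q_s>-\bar q}$ depend on the current inventory through the indicator, so the Itô formula for $\varphi(s,S^a,S^b,S_s,Q_s)$ must treat the compound-Poisson parts of $Q$ via their compensators; the resulting integrand is precisely the expression appearing inside the two $\max_{\ell^i}$ operators of \eqref{HJBMM} evaluated at the chosen $(\ell^a,\ell^b)$, which is how the $\max$'s are recovered in the HJB after taking expectations. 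The second subtlety is the choice of $V$: it must lie strictly inside the connected component of $\mathcal{D}$ at $(t_0,\ldots)$ so that no jump of $S^a$ or $S^b$ occurs on $[t_0,\tau_h]$; otherwise the smoothness of $\varphi$ in the discrete price arguments would not be enough to run Itô's formula. Once this localization is set up, the rest is a routine verification that the value function of a jump-diffusion control problem is a viscosity solution of its associated HJB equation.
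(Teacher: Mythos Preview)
Your plan is essentially the same as the paper's: dispose of the terminal and boundary conditions via Proposition~\ref{prop_continuity_verif}, then prove the super- and subsolution properties by the standard DPP + It\^o + contradiction scheme, localizing in $\mathcal D$ so that $S^a, S^b$ do not jump before the stopping time. The paper uses a sequence $(t_n,S_n)\to(\hat t,\hat S)$ and a specific feedback control built from $\varphi$ rather than your constant maximizer $(\ell^a_\star,\ell^b_\star)$, but these are stylistic choices and either works.

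There is, however, one point where your sketch is loose and the paper is explicit. You take $\tau_h=(t_0+h)\wedge\tau_V$ with $\tau_V$ the exit time of $(S^a,S^b,S)$ from $V$, and conclude by ``dividing by $h$''. But $\tau_V$ does not stop at the first jump of $Q$; on the event $\{Q_{\tau_h}\neq q_0\}$, which has probability $O(h)$ (not $o(h)$), there is no reason for $h-\varphi\ge 0$ at the shifted inventory, since the local minimum of $h-\varphi$ is taken only over $(t,S)$ with $q$ fixed at $q_0$. After dividing by $h$ this contributes a nonvanishing constant, and the contradiction does not close. The paper resolves this by (i) letting its stopping time $\pi_n$ be the first exit of the \emph{full} state $(t,S^a_t,S^b_t,S_t,q_t)$ from a neighborhood $B_{\mathcal D}$, so that in particular it stops at the first jump of $Q$, and (ii) arranging---possible since the minimum of $h-\varphi$ is strict---that $\varphi+\eta\le h$ also on the set $\tilde B$ where $q\in\{\bar q-1,\bar q+1\}$. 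Then at $\pi_n$ one always has $h\ge\varphi+\eta$, and the fixed gap $\eta$ directly contradicts the DPP without any division by $h$. An equivalent repair inside your framework is to set $\varphi(\cdot,q)=h(\cdot,q)$ for $q\neq q_0$ (this is consistent with the nonlocal form of the operator in Lemma~\ref{viscolemma}, where the jump terms use the solution itself); the problematic term then vanishes identically. Either route completes your outline.
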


\begin{proof}
Let $(\bar{S}^a, \bar{S}^b, \bar{q}) \in \alpha^a \mathbb{N} \times \alpha^b \mathbb{N}  \times \mathcal Q,$ and $(t_n, S_n)_{n\in \mathbb{N}} \in [0,T] \times \mathbb{R}$ be a sequence such that $$t_n \underset{n\to +\infty}{\to } \hat{t} \in [0,T),$$ $$S_n \underset{n\to +\infty}{\to } \hat{S} \in \mathbb{R},$$ $$h(t_n, \bar{S}^a, \bar{S}^b, S_n, \bar{q}) \underset{n\to +\infty}{\to } h (\hat{t}, \bar{S}^a, \bar{S}^b, \hat{S}, \bar{q}),$$ with $(\bar S ^a, \bar S ^b, \hat S ) \in \mathcal D.$ Without loss of generality we can assume that $(\bar S ^a, \bar S ^b, S_n ) \in \mathcal D$ for all $n\in \mathbb N .$\\

Let us first consider the case $\hat{t} = T$. Let us take two arbitrary controls $\ell^a_s = \ell^b_s = 0$, for all $s \in [0,T),$ then for all $n \in \mathbb{N}$ we have
\begin{align*}
    h(t_n, \bar{S}^a, \bar{S}^b, S_n,\bar{q}) \geq \mathbb{E}_{t_n,\bar{S}^a, \bar{S}^b,  S_n,\bar{q}}\Bigg[  Q_T(S_T-AQ_T)  - \phi \int_{t_n}^T Q_s^2 \mathrm{d}s  - \phi_- \int_{t_n}^T Q_s^2 \mathbf{1}_{Q_s<0} \mathrm{d}s\Bigg],
\end{align*}
and by dominated convergence we can obtain $$h(T,\bar{S}^a, \bar{S}^b, \hat{S}, \bar{q}) \geq \bar{q}(\hat{S} - A\bar{q}).$$

Now let us consider the case $\hat{t}<T.$ Let $\varphi : [0,T) \times \mathcal D \times \mathcal Q \to \mathbb{R}$ be a continuous function, $\mathcal{C}^1$ in $t$, $\mathcal{C}^2$ in $S$ and such that $0 = \underset{[0,T) \times \mathcal{D}}{\min} (h - \varphi) = (h - \varphi)(\hat{t}, \bar{S}^a, \bar{S}^b, \hat{S}, \bar{q}).$ We also assume that $h=\varphi$ only at the point $(\hat{t}, \bar{S}^a, \bar{S}^b, \hat{S}, \bar{q})$. Let us assume that there exists $\eta>0$ such that 
\begin{align*}
     2\eta \leq & \partial_t \varphi(\hat{t},\bar{S}^a,\bar{S}^b,\hat{S},\bar{q})-\phi \bar{q}^2 - \phi_- \bar{q}^2 \mathbf{1}_{\bar{q}<0} + \frac{1}{2}\sigma^2 \partial_{SS}\varphi(\hat{t},\bar{S}^a,\bar{S}^b,\hat{S},\bar{q}) \\
     & +  \frac{\lambda}{1+(\kappa\alpha^a)^2}\max_{\ell^a\in \{0,1\}}\Bigg\{ \ell^a\Big(\bar{S}^a+\varphi(\hat{t},\bar{S}^a,\bar{S}^b,\hat{S},\bar{q}-\ell^a)-\varphi(\hat{t},\bar{S}^a,\bar{S}^b,\hat{S},\bar{q})\Big)\Bigg\}\\
     & + \frac{\lambda}{1+(\kappa\alpha^b)^2}\max_{\ell^b\in \{0,1\}}\Bigg\{ \ell^b\Big(-\bar{S}^b+\varphi(\hat{t},\bar{S}^a,\bar{S}^b,\hat{S},\bar{q}+\ell^b)-\varphi(\hat{t},\bar{S}^a,\bar{S}^b,\hat{S},\bar{q})\Big)\Bigg\}.
\end{align*}
Then we must have 
\begin{align*}
     0 \leq & \partial_t \varphi(t,\bar{S}^a,\bar{S}^b,S,\bar{q})-\phi \bar{q}^2 - \phi_- \bar{q}^2 \mathbf{1}_{\bar{q}<0}  + \frac{1}{2}\sigma^2 \partial_{SS}\varphi(t,\bar{S}^a,\bar{S}^b,S,\bar{q}) \\
     & + \frac{\lambda}{1+(\kappa\alpha^a)^2}\max_{\ell^a\in \{0,1\}}\Bigg\{ \ell^a\Big(\bar{S}^a+\varphi(t,\bar{S}^a,\bar{S}^b,S,\bar{q}-\ell^a)-\varphi(t,\bar{S}^a,\bar{S}^b,S,\bar{q})\Big)\Bigg\}\\
     & + \frac{\lambda}{1+(\kappa\alpha^b)^2}\max_{\ell^b\in \{0,1\}}\Bigg\{ \ell^b\Big(-\bar{S}^b+\varphi(t,\bar{S}^a,\bar{S}^b,S,\bar{q}+\ell^b)-\varphi(t,\bar{S}^a,\bar{S}^b,S,\bar{q})\Big)\Bigg\},
\end{align*}
for all $(t,S) \in B= \left((\hat{t}-r,  \hat{t}+r) \cap [0,T) \right) \times \left(\hat{S}-r, \hat{S}+r\right)$ for some $r >0$. Without loss of generality, we can assume that $B$ contains the sequence $(t_n, S_n)_n$ and that for all $(t,S) \in B,$ we have $(\bar S^a, \bar S^b, S) \in \mathcal D.$ We can choose the value of $\eta$ such that 
$$\varphi(t,\bar{S}^a,\bar{S}^b,S,\bar{q}) + \eta \leq  h(t,\bar{S}^a,\bar{S}^b,S,\bar{q})$$
on $\partial_p B : = \bigg(\left((\hat{t}-r,  \hat{t}+r) \cap [0,T) \right) \times \left(\left\{\hat{S}-r\right\}\cup \left\{\hat{S}+r\right\}\right)\bigg) \cup \bigg( \{\hat{t}+r \} \times \left[\hat{S}-r, \hat{S}+r\right] \bigg).$ We can also assume that
$$\varphi(t,S^a,S^b,S,q) + \eta \leq h(t,S^a,S^b,S,q),$$ 
for $(t, S^a,S^b,S,q) \in \tilde{B}$ with 
\begin{align*}
\tilde{B} = \bigg\{(t,\bar S^a,\bar S^b,S,q) \big| & (t,S) \in B,  q\in \{\bar{q}-1, \bar{q}+1 \}\cap \mathcal Q \bigg\}.  
\end{align*}
We introduce the set 
$$B_{\mathcal{D}} = \left\{(t,\bar{S}^a,\bar{S}^b,S,\bar{q}) \big| (t,S) \in B \right\}$$ 
and set $\pi_n=\inf\{t\geq t_n | (t, S^a_t, S^b_t, S_t, q_t)\notin B_{\mathcal{D}}\}$ with $S^i_{t_n} = \bar{S}^i,$ $q_{t_n} = \bar{q},$ $S_{t_n} = S_n,$ where the processes are controlled by  $$\ell^a_t = \mathbf{1}_{\{ S^a_t + \varphi(t, S^a_t, S^b_t, S_t, q_{t-}-1) - \varphi(t, S^a_t, S^b_t, S_t, q_{t-}) >0\}},$$ $$\ell^b_t = \mathbf{1}_{\{ - S^b_t + \varphi(t, S^a_t, S^b_t, S_t, q_{t-}+1) - \varphi(t, S^a_t, S^b_t, S_t, q_{t-})>0 \}}.$$
{\allowdisplaybreaks
Using Itô's formula and noting that $S_t^a,S_t^b$ do not jump between $t_n$ and $\pi_n$, we derive
\begin{align*}
\varphi(\pi_n, S^a_{\pi_n}, S^b_{\pi_n}, S_{\pi_n},& q_{\pi_n})\!=\!\varphi(t_n, \bar{S}^a, \bar{S}^b, S_n, \bar{q}) + \int_{t_n}^{\pi_n}\!\left\{ \partial_t \varphi(t, S^a_t, S^b_t, S_t, q_t) + \frac 12 \sigma^2 \partial_{S S} \varphi (t, S^a_t, S^b_t, S_t, q_t) \right\}\!dt\\
& \quad + \int_{t_n}^{\pi_n} \lambda(\ell^a_t)\left\{ \varphi (t, S^a_t, S^b_t, S_t, q_{t-} - \ell^a_t) - \varphi (t, S^a_t, S^b_t, S_t, q_{t-})  \right\} dt\\
&\quad + \int_{t_n}^{\pi_n} \lambda(\ell^b_t)\left\{ \varphi (t, S^a_t, S^b_t, S_t, q_{t-} + \ell^b_t) - \varphi (t, S^a_t, S^b_t, S_t, q_{t-})  \right\} dt\\
&\quad + \int_{t_n}^{\pi_n} \sigma \partial_{S} \varphi (t, S^a_t, S^b_t, S_t, q_t) dW_t\\
& \quad + \int_{t_n}^{\pi_n} \left\{ \varphi (t, S^a_t, S^b_t, S_t, q_{t-} - \ell^a_t) - \varphi (t, S^a_t, S^b_t, S_t, q_{t-})  \right\} d\tilde{N}^a_t\\
&\quad + \int_{t_n}^{\pi_n} \left\{ \varphi (t, S^a_t, S^b_t, S_t, q_{t-} + \ell^b_t) - \varphi (t, S^a_t, S^b_t, S_t, q_{t-})  \right\} d\tilde{N}^b_t\\
& \geq \varphi(t_n, \bar{S}^a, \bar{S}^b, S_n, \bar{q})\\
&\quad - \int_{t_n}^{\pi_n} \left\{S_t^a\lambda(\ell^a_t) -  S_t^b \lambda(\ell^b_t) - \phi q_t^2 - \phi_- q_t^2 \mathbf{1}_{q_t<0}\right\} dt\\
&\quad + \int_{t_n}^{\pi_n} \sigma \partial_{S} \varphi (t, S^a_t, S^b_t, S_t, q_t) dW_t\\
& \quad + \int_{t_n}^{\pi_n} \left\{ \varphi (t, S^a_t, S^b_t, S_t, q_{t-} - \ell^a_t) - \varphi (t, S^a_t, S^b_t, S_t, q_{t-})  \right\} d\tilde{N}^a_t\\
&\quad + \int_{t_n}^{\pi_n} \left\{ \varphi (t, S^a_t, S^b_t, S_t, q_{t-} + \ell^b_t) - \varphi (t, S^a_t, S^b_t, S_t, q_{t-})  \right\} d\tilde{N}^b_t.
\end{align*}}
Then by taking the expectation we get
\begin{align*}
    \varphi(t_n, \bar{S}^a, \bar{S}^b, S_n, \bar{q}) \leq & \mathbb{E} \bigg[\varphi(\pi_n, S^a_{\pi_n}, S^b_{\pi_n}, S_{\pi_n}, q_{\pi_n}) + \int_{t_n}^{\pi_n} \left\{S_t^a \lambda(\ell^a_t) -  S_t^b \lambda(\ell^b_t) - \phi q_t^2 - \phi_- q_t^2\mathbf{1}_{q_t<0}\right\} dt \bigg].
\end{align*}
Thus
\begin{align*}
    \varphi(t_n, \bar{S}^a, \bar{S}^b, S_n, \bar{q}) \leq & -\eta +  \mathbb{E} \bigg[h(\pi_n, S^a_{\pi_n}, S^b_{\pi_n}, S_{\pi_n}, q_{\pi_n}) + \int_{t_n}^{\pi_n} \left\{S_t^a \lambda(\ell^a_t) -  S_t^b \lambda(\ell^b_t) - \phi q_t^2 - \phi_- q_t^2\mathbf{1}_{q_t<0} \right\} dt \bigg].
\end{align*}
As 
\begin{align*}
 &\varphi(t_n, \bar{S}^a, \bar{S}^b, S_n, \bar{q}) \underset{n\to +\infty}\to \varphi(\hat{t},\bar{S}^a, \bar{S}^b, \hat{S}, \bar{q}) = h(\hat{t},\bar{S}^a, \bar{S}^b, \hat{S}, \bar{q}), \\
 & h(t_n,  \bar{S}^a, \bar{S}^b, S_n, \bar{q}) \underset{n\to +\infty}\to h(\hat{t},\bar{S}^a, \bar{S}^b, \hat{S}, \bar{q}),
\end{align*}
there exists $n_0\in \mathbb{N}$ such that for all $n\geq n_0$,  $h(t_n,  \bar{S}^a, \bar{S}^b, S_n, \bar{q}) - \frac{\eta}{2} \leq \varphi(t_n, \bar{S}^a, \bar{S}^b, S_n, \bar{q})$ and we deduce
\begin{align*}
    h(t_n, \bar{S}^a, \bar{S}^b, S_n, \bar{q}) \leq & -\frac{\eta}2 +  \mathbb{E} \bigg[h(\pi_n, S^a_{\pi_n}, S^b_{\pi_n}, S_{\pi_n}, q_{\pi_n}) + \int_{t_n}^{\pi_n} \left\{S_t^a \lambda(\ell^a_t) -  S_t^b \lambda(\ell^b_t) - \phi q_t^2  - \phi_- q_t^2 \mathbf{1}_{q_t<0}\right\} dt \bigg],
\end{align*}
which contradicts the dynamic programming principle. Therefore,
\begin{align*}
     0\geq & \partial_t \varphi(\hat{t},\bar{S}^a,\bar{S}^b,\hat{S},\bar{q})-\phi \bar{q}^2 - \phi_- \bar q^2\mathbf{1}_{\bar q<0}+ \frac{1}{2}\sigma^2 \partial_{SS}\varphi(\hat{t},\bar{S}^a,\bar{S}^b,\hat{S},\bar{q}) \\
     & + \frac{\lambda}{1+(\kappa\alpha^a)^2}\max_{\ell^a\in \{0,1\}}\Bigg\{ \ell^a\Big(\bar{S}^a+\varphi(\hat{t},\bar{S}^a,\bar{S}^b,\hat{S},\bar{q}-\ell^a)-\varphi(\hat{t},\bar{S}^a,\bar{S}^b,\hat{S},\bar{q})\Big)\Bigg\}\\
     & + \frac{\lambda}{1+(\kappa\alpha^b)^2}\max_{\ell^b\in \{0,1\}}\Bigg\{ \ell^b\Big(-\bar{S}^b+\varphi(\hat{t},\bar{S}^a,\bar{S}^b,\hat{S},\bar{q}+\ell^b)-\varphi(\hat{t},\bar{S}^a,\bar{S}^b,\hat{S},\bar{q})\Big)\Bigg\},
\end{align*}
and $h$ is a viscosity supersolution of the HJB equation on $[0,T) \times \mathcal{D} \times \mathcal Q.$\\

The proof for the subsolution part is identical.
\end{proof}

For a given $\rho>0$, we introduce the function $\tilde{h}$ such that $$\tilde{h}(t,S^a, S^b, S, q) = e^{\rho t}h(t,S^a, S^b, S, q) \quad \forall \ (t,S^a, S^b, S, q) \in [0,T] \times \mathcal{D} \times \mathcal Q.$$ Then $\tilde{h}$ is a viscosity solution of the following HJB equation:
\begin{align}
0 = & - \rho \tilde{h}(t,S^a,S^b,S,q) + \partial_t \tilde{h}(t,S^a,S^b,S,q)-\phi q^2 - \phi_- q^2\mathbf{1}_{q<0}
+ \frac{1}{2}\sigma^2 \partial_{SS}\tilde{h} \nonumber \\
& + \frac{\lambda}{1+(\kappa\alpha^a)^2}\max_{\ell^a\in \{0,1\}}\Bigg\{ \ell^a\Big(e^{\rho t}S^a+\tilde{h}(t,S^a,S^b,S,q-\ell^a)-\tilde{h}(t,S^a,S^b,S,q)\Big)\Bigg\} \nonumber\\
& + \frac{\lambda}{1+(\kappa\alpha^b)^2}\max_{\ell^b\in \{0,1\}}\Bigg\{ \ell^b\Big(e^{\rho t}(-S^b)+\tilde{h}(t,S^a,S^b,S,q+\ell^b)-\tilde{h}(t,S^a,S^b,S,q)\Big)\Bigg\},
\label{covHJB}     
\end{align}
and we see that proving a maximum principle for \eqref{covHJB} is equivalent to proving one for \eqref{HJBMM}. 

\begin{definition}
Let $U:[0,T) \times \mathcal{D} \times \mathcal{Q} \to \mathbb{R}$ be continuous with respect to $(t,S).$ For $(\hat{t}, \bar{S}^a, \bar{S}^b, \hat{S}, \bar{q}) \in [0,T) \times \mathcal{D} \times \mathcal Q,$ we say that $(y,p,A) \in \mathbb{R}^3$ is in the subjet $\mathcal{P}^- U(\hat{t}, \bar{S}^a, \bar{S}^b, \hat{S}, \bar{q})$ (resp. the superjet $\mathcal{P}^+ U(\hat{t}, \bar{S}^a, \bar{S}^b, \hat{S}, \bar{q})$) if 
\begin{align*}
    & U(t,\bar{S}^a, \bar{S}^b, S, \bar{q}) \geq U(\hat{t}, \bar{S}^a, \bar{S}^b, \hat{S}, \bar{q}) + y(t-\hat{t}) + p(S-\hat{S}) + \frac 12 A(S-\hat{S})^2 + o \left( |t-\hat{t}| + |S - \hat{S}|^2 \right),\\
    &\Big(\text{resp. } U(t,\bar{S}^a, \bar{S}^b, S, \bar{q}) \leq U(\hat{t}, \bar{S}^a, \bar{S}^b, \hat{S}, \bar{q}) + y(t-\hat{t}) + p(S-\hat{S}) + \frac 12 A(S-\hat{S})^2 + o \left( |t-\hat{t}|\!+\!|S - \hat{S}|^2 \right)\Big),
\end{align*}
for all $(t,S)$ such that $(t,\bar{S}^a, \bar{S}^b, S, \bar{q}) \in [0,T) \times \mathcal{D} \times \mathcal Q.$\\

We also define $\bar{\mathcal{P}}^- U(\hat{t}, \bar{S}^a, \bar{S}^b, \hat{S}, \bar{q})$ as the set of points $(y,p,A) \in \mathbb{R}^3$ such that there exists a sequence $(t_n, \bar{S}^a, \bar{S}^b, S_n, \bar{q}, y_n, p_n, A_n) \in [0,T) \times \mathcal{D} \times \mathcal Q \times \mathcal{P}^- U(t_n, \bar{S}^a, \bar{S}^b, S_n, \bar{q})$ satisfying $$(t_n, \bar{S}^a, \bar{S}^b, S_n, \bar{q}, y_n, p_n, A_n) \underset{n \to +\infty}{\to} (\hat{t}, \bar{S}^a, \bar{S}^b, \hat{S}, \bar{q}, y, p, A).$$ The set $\bar{\mathcal{P}}^+ U(\hat{t}, \bar{S}^a, \bar{S}^b, \hat{S}, \bar{q})$ is defined similarly.
\end{definition}

Let us recall one of the definitions of viscosity solutions which we are going to use for the proof of the uniqueness.

\begin{lemma}
A continuous function $\tilde{U}$ is a viscosity supersolution (resp. subsolution) to \eqref{covHJB} on $[0,T) \times \mathcal{D} \times \mathcal Q$ if and only if for all $(\hat{t},\bar{S}^a, \bar{S}^b, \hat{S}, \bar{q}) \in [0,T) \times \mathcal{D}\times \mathcal Q$ and all $(\hat{y}, \hat{p}, \hat{A}) \in \bar{\mathcal{P}}^- U(\hat{t}, \bar{S}^a, \bar{S}^b, \hat{S}, \bar{q})$ (resp. $\bar{\mathcal{P}}^+ U(\hat{t}, \bar{S}^a, \bar{S}^b, \hat{S}, \bar{q})$), we have 
\begin{align*}
 & -\rho \tilde{U}(\hat{t}, \bar{S}^a, \bar{S}^b, \hat{S}, \bar{q}) + \hat{y}-\phi \bar{q}^2 - \phi_- \bar q^2\mathbf{1}_{q<0} + \frac{1}{2}\sigma^2 A\\
 & + \frac{\lambda}{1+(\kappa\alpha^a)^2}\max_{\ell^a\in \{0,1\}}\Bigg\{ \ell^a\Big(e^{\rho t}\bar{S}^a+U(\hat{t},\bar{S}^a,\bar{S}^b,\hat{S},\bar{q}-\ell^a)-U(\hat{t},\bar{S}^a,\bar{S}^b,\hat{S},\bar{q})\Big)\Bigg\}\\
 & + \frac{\lambda}{1+(\kappa\alpha^b)^2}\max_{\ell^b\in \{0,1\}}\Bigg\{ \ell^b\Big(e^{\rho t}(-\bar{S}^b)+U(\hat{t},\bar{S}^a,\bar{S}^b,\hat{S},\bar{q}+\ell^b)-U(\hat{t},\bar{S}^a,\bar{S}^b,\hat{S},\bar{q})\Big)\Bigg\} \leq 0
\end{align*}
(resp. $\geq 0$).
\label{viscolemma}
\end{lemma}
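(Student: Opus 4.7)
The plan is to prove the equivalence in two stages: first establish the analogous characterization with the (non-closed) subjet $\mathcal{P}^-$ in place of its closure $\bar{\mathcal{P}}^-$; then extend it to $\bar{\mathcal{P}}^-$ by passing to the limit in the HJB inequality, using continuity of $\tilde U$ in $(t,S)$ and continuity of the HJB operator in $(y,p,A)$. The subsolution case is treated analogously by reversing inequalities and swapping min with max.

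For the first stage, given $(\hat y,\hat p,\hat A)\in\mathcal{P}^-\tilde U(\hat t,\bar S^a,\bar S^b,\hat S,\bar q)$, I would exhibit a test function $\varphi$, smooth in $t$ and $S$, with $\partial_t\varphi(\hat t,\hat S)=\hat y$, $\partial_S\varphi(\hat t,\hat S)=\hat p$, $\partial_{SS}\varphi(\hat t,\hat S)=\hat A$, such that $\tilde U-\varphi$ attains a local minimum equal to $0$ at $(\hat t,\bar S^a,\bar S^b,\hat S,\bar q)$. The construction is the usual one: start from the quadratic $\tilde U(\hat t,\bar S^a,\bar S^b,\hat S,\bar q)+\hat y(t-\hat t)+\hat p(S-\hat S)+\tfrac12 \hat A(S-\hat S)^2$ and subtract a suitably smoothed (e.g.\ mollified or sup-convolved) majorant of the little-$o$ remainder appearing in the definition of $\mathcal{P}^-\tilde U$. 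Because the inventory $q$ is discrete and the subjet only involves derivatives in $(t,S)$, one may freely prescribe the values of $\varphi$ at $\bar q\pm 1$, and in particular set $\varphi(t,\bar S^a,\bar S^b,S,\bar q\pm 1):=\tilde U(t,\bar S^a,\bar S^b,S,\bar q\pm 1)$ in a neighbourhood of $(\hat t,\hat S)$. Applying the test-function supersolution property (as formulated in the proof of the previous proposition) to this $\varphi$ yields exactly the HJB inequality of the lemma, with $\tilde U$ appearing in the jump terms. Conversely, for any admissible test function $\varphi$ realising a local minimum of $\tilde U-\varphi$ at the point, a Taylor expansion of $\varphi$ in $(t,S)$ gives $(\partial_t\varphi,\partial_S\varphi,\partial_{SS}\varphi)(\hat t,\hat S)\in\mathcal{P}^-\tilde U(\hat t,\bar S^a,\bar S^b,\hat S,\bar q)$; adjusting $\varphi$ at the isolated points $\bar q\pm 1$ to coincide with $\tilde U$ (which is allowed without affecting the minimum property) converts the test-function HJB inequality into the form of the lemma.

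For the second stage, fix $(\hat y,\hat p,\hat A)\in\bar{\mathcal{P}}^-\tilde U(\hat t,\bar S^a,\bar S^b,\hat S,\bar q)$ and an approximating sequence $(t_n,\bar S^a,\bar S^b,S_n,\bar q,y_n,p_n,A_n)$ with $(y_n,p_n,A_n)\in\mathcal{P}^-\tilde U(t_n,\bar S^a,\bar S^b,S_n,\bar q)$. By the first stage, the HJB inequality holds at each index $n$. Continuity of $\tilde U$ in $(t,S)$, which for the value function is exactly the content of Proposition \ref{prop_continuity_verif} and which is built into the hypothesis of the lemma for a general test subject, ensures that $\tilde U(t_n,\bar S^a,\bar S^b,S_n,\bar q')\to\tilde U(\hat t,\bar S^a,\bar S^b,\hat S,\bar q')$ for $\bar q'\in\{\bar q-1,\bar q,\bar q+1\}$, since $\bar q$ is fixed along the sequence. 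All other terms depend continuously on the converging scalars $(t_n,y_n,A_n)$, and the maximum over the two-element set $\{0,1\}$ is continuous, so passing to the limit preserves the inequality.

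The main obstacle is the test-function construction in the first stage, namely producing a smooth $\varphi$ that simultaneously matches the prescribed triple $(\hat y,\hat p,\hat A)$ at $(\hat t,\hat S)$, remains $\leq\tilde U$ in a small neighbourhood in $(t,S)$, and carries the prescribed values at the discrete jump-in-$q$ points so that the HJB reformulation in terms of $\tilde U$ (rather than $\varphi$) at those points is consistent. The discreteness of $q$ decouples this last requirement from the smoothing in $(t,S)$ and makes it trivial; the genuine work is the standard quadratic-plus-mollified-remainder construction, which is routine but must be carried out carefully to ensure the $C^{1,2}$ regularity and the local minimum property. The closure step is then a direct continuity argument driven by Proposition \ref{prop_continuity_verif}.
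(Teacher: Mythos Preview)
The paper does not supply its own proof of this lemma; it simply refers to \cite{bouchard2007introduction}. Your proposal, by contrast, sketches the standard argument for the equivalence between the test-function and jet formulations of viscosity solutions, and the sketch is correct.

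Two brief remarks. First, your handling of the discrete variable $q$ is the right one: because the HJB operator is nonlocal in $q$ (it evaluates the unknown at $\bar q\pm 1$), the jet characterization naturally keeps $\tilde U$ itself in those terms, and you correctly observe that when building a test function $\varphi$ from a subjet element, the values $\varphi(\cdot,\bar q\pm 1)$ are unconstrained by the smoothness-in-$(t,S)$ requirement and can be set equal to $\tilde U(\cdot,\bar q\pm 1)$. This is exactly how the equivalence is proved for integro-PDEs and is what the cited reference would do. Second, the closure step via continuity is routine and your argument is fine; note only that the continuity hypothesis on $\tilde U$ is part of the lemma's assumptions, so there is no need to invoke Proposition~\ref{prop_continuity_verif} here.
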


We refer to \cite{bouchard2007introduction} for a proof of this result. We can now state a maximum principle from which the uniqueness can be easily deduced:

\begin{proposition}
Let $U$ (resp. $V$) be a continuous viscosity supersolution (resp. subsolution) of \eqref{HJBMM} with polynomial growth on $[0,T) \times \mathcal{D} \times \mathcal Q$ and satisfying the continuity conditions \eqref{Bcond}. If $U\geq V$ on $\{T\} \times \mathcal{D} \times \mathcal Q$, then $U\geq V$ on $[0,T) \times \mathcal{D} \times \mathcal Q.$
\end{proposition}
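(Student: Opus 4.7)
The plan is to prove the comparison principle for the transformed equation \eqref{covHJB}, which yields the same result for \eqref{HJBMM}. I will argue by contradiction, use a polynomial penalization to localize, and then double the variable $S$ à la Crandall--Ishii--Lions to extract a contradiction from the subjet/superjet characterization in Lemma \ref{viscolemma}.

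First I would reduce to $\tilde U = e^{\rho t}U$ and $\tilde V = e^{\rho t}V$, and suppose for contradiction that $M := \sup_{[0,T)\times \bar{\mathcal D}\times \mathcal Q}(\tilde V - \tilde U) > 0$. Because $\tilde V - \tilde U$ has polynomial growth in $S$, I introduce the penalization
\begin{equation*}
\Theta_{\epsilon,\beta}(t,S) = \epsilon (1 + S^2)^k + \frac{\beta}{T-t},
\end{equation*}
with $k$ chosen large enough to dominate the growth of both $\tilde V$ and $\tilde U$. For $\epsilon,\beta$ sufficiently small, the function $\tilde V - \tilde U - \Theta_{\epsilon,\beta}$ still has a strictly positive supremum, which by the growth and by the $\beta/(T-t)$ term is attained at some point $(t^\star, S^{a\star}, S^{b\star}, S^\star, q^\star)$ with $t^\star < T$. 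If $(S^{a\star},S^{b\star},S^\star)\in\partial\mathcal D$, then the continuity conditions \eqref{Bcond} applied to both $\tilde U$ and $\tilde V$ allow me to replace $(S^{a\star},S^{b\star})$ by an adjacent pair without changing the value of $\tilde V - \tilde U$, so I may assume the supremum lies in the interior of some $\mathcal D_{i,j}$.

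Next I double the continuous variable only (the discrete variables $S^a, S^b, q$ are fixed to their optimal values), considering
\begin{equation*}
\Phi_n(t,S,S') = \tilde V(t,S^{a\star}, S^{b\star}, S, q^\star) - \tilde U(t, S^{a\star}, S^{b\star}, S', q^\star) - \frac{n}{2}(S-S')^2 - \Theta_{\epsilon,\beta}(t,S),
\end{equation*}
and letting $(t_n, S_n, S'_n)$ denote a maximizer. Standard arguments give $t_n\to t^\star$, $S_n, S'_n \to S^\star$ and $n(S_n - S'_n)^2 \to 0$. Ishii's lemma then yields elements $(y_n^V, p_n^V, A_n^V) \in \bar{\mathcal P}^+\tilde V$ and $(y_n^U, p_n^U, A_n^U) \in \bar{\mathcal P}^-\tilde U$ with $p_n^V - p_n^U = \partial_S\Theta_{\epsilon,\beta}$, $y_n^V - y_n^U = \partial_t \Theta_{\epsilon,\beta}$, and $A_n^V - A_n^U$ controlled in terms of $\partial_{SS}\Theta_{\epsilon,\beta}$.

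Now I apply the viscosity subsolution inequality for $\tilde V$ and the viscosity supersolution inequality for $\tilde U$, then subtract. The quadratic term produces a contribution of order $\frac12\sigma^2\partial_{SS}\Theta_{\epsilon,\beta}(t_n,S_n) = O(\epsilon)$; the linear price terms $e^{\rho t_n}S^{a\star}$ and $-e^{\rho t_n}S^{b\star}$ cancel exactly inside the two $\max_{\ell^a}, \max_{\ell^b}$; and the nonlocal differences in $q$ satisfy, for any $\ell^a\in\{0,1\}$,
\begin{equation*}
\ell^a\bigl[(\tilde V - \tilde U)(t_n, \cdot, q^\star - \ell^a) - (\tilde V - \tilde U)(t_n, \cdot, q^\star)\bigr] \le \ell^a \cdot \mathrm{o}(1)
\end{equation*}
as $n\to\infty$ and then $\epsilon,\beta\to 0$, because $q^\star$ is the $q$-argmax of $\tilde V - \tilde U$. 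Using the $-\rho$ coercivity term one gets $\rho M \le 0$, contradicting $M>0$. The uniqueness stated in Theorem \ref{viscoMM} then follows by applying the result with $U = V = h$ and symmetric roles.

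The main obstacle I anticipate is the interplay between the boundary $\partial\mathcal D$ and the doubling procedure: one must guarantee that the maximizers $(t_n,S_n,S'_n)$ stay in the interior $\mathcal D_{i,j}$ so that the PDE inequalities are available. The continuity conditions \eqref{Bcond} are what make this possible, since any approach of a maximizer to the boundary can be absorbed by switching to a neighboring patch where $\tilde V - \tilde U$ takes the same value; Proposition \ref{prop_continuity_verif} guarantees this transfer is consistent. A secondary technical point is choosing the growth exponent $k$ in $\Theta_{\epsilon,\beta}$ large enough for the polynomial growth hypothesis but small enough that $\partial_{SS}\Theta_{\epsilon,\beta}$ remains $O(\epsilon)$ near the maximizer, which is routine.
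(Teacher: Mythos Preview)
Your overall architecture is the same as the paper's: pass to $\tilde U=e^{\rho t}U$, $\tilde V=e^{\rho t}V$ and equation \eqref{covHJB}; argue by contradiction; penalize polynomially in $S$ to localize; use the continuity conditions \eqref{Bcond} to place the maximizer in the interior of a patch $\mathcal D_{i,j}$; double the spatial variable and invoke the parabolic Ishii lemma together with Lemma \ref{viscolemma}. The paper differs only in bookkeeping: it secures $\hat t<T$ directly from the terminal inequality $\tilde U\ge\tilde V$ at $t=T$ (rather than via your $\beta/(T-t)$ barrier), and it inserts an extra anchor $|t-\hat t|^2+|S-\hat S|^4$ in the doubled functional to force $(t_n,S_n,S'_n)\to(\hat t,\hat S,\hat S)$.

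The one substantive difference is the handling of the nonlocal $q$-jump terms after subtraction. You use $\max f-\max g\le\max(f-g)$, which cancels the $e^{\rho t}S^{a\star}$, $-e^{\rho t}S^{b\star}$ contributions, and then bound the remaining difference by $0$ in the limit via the $q$-maximality of $q^\star$ (legitimate because $\Theta_{\epsilon,\beta}$ is $q$-independent); combined with the $-\rho$ term this yields $\rho M\le o(1)$ directly. The paper takes a cruder route: it discards the nonpositive $-\max\{\cdots\tilde U\cdots\}$ terms, bounds the surviving $\max\{\cdots\tilde V\cdots\}$ terms by a constant $M$, and then absorbs this $M$ through an additional parameter in the penalization $\varepsilon e^{-\mu t}(1+\|S\|^{2p}+\|R\|^{2p})$, taking $\mu$ large so that $\partial_t\phi$ dominates. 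Your treatment is the more standard one for finite-jump nonlocal operators and avoids introducing the extra parameter $\mu$; the paper's approach buys simplicity in that it never needs to track the $q$-argmax through the $n\to\infty$ limit.
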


\begin{proof}
As before, we introduce the functions $\tilde{U}$ and $\tilde{V}$ such that $$\tilde{U}(t,S^a, S^b, S, q) = e^{\rho t}U(t,S^a, S^b, S, q) \qquad \text{and} \qquad \tilde{V}(t,S^a, S^b, S, q) = e^{\rho t}V(t,S^a, S^b, S, q).$$
Then $\tilde{U}$ and $\tilde{V}$ are respectively viscosity supersolution and subsolution of Equation \eqref{covHJB}
on $[0,T) \times \mathcal{D} \times \mathcal Q$ with $\tilde{U} \geq \tilde{V}$ on $\{T\} \times \mathcal{D} \times \mathcal Q$. To prove the proposition, it is enough to prove that $\tilde{U} \geq \tilde{V}$ on $[0,T) \times \mathcal{D} \times \mathcal Q.$ We proceed by contradiction. Let us assume that $\underset{[0,T) \times \mathcal{D} \times \mathcal Q}{\sup} \tilde{V}-\tilde{U} >0.$ Let $p\in \mathbb{N}^*$ such that $$\underset{\|S\|_2 \to +\infty}{\lim}\quad  \underset{\substack{t \in [0,T], q\in \mathcal Q \\ (S,S^a,S^b)\in\mathcal{D}}}{\sup}\quad  \frac{|\tilde{U}(t, S^a, S^b, S, q)| + |\tilde{V}(t,S^a, S^b, S, q)|}{1+\|S\|_2^{2p}} = 0, $$
where $\|\cdot\|_2$ is the Euclidian norm. Then there exists $(\hat{t}, \bar{S}^a, \bar{S}^b, \hat{S}, \bar{q}) \in [0,T] \times  {\mathcal{D}}\times \mathcal Q$ such that
\begin{align*}
0&< \tilde{V}(\hat{t}, \bar{S}^a, \bar{S}^b, \hat{S}, \bar{q}) - \tilde{U}(\hat{t}, \bar{S}^a, \bar{S}^b, \hat{S}, \bar{q}) - \phi(\hat{t},\hat{S},\hat{S}, \bar{q})\\
&= \underset{(t, S^a, S^b, S, q)}{\sup} \tilde{V}(t, S^a, S^b, S, q) - \tilde{U}(t, S^a, S^b, S, q) - \phi(t,S,S,q),    
\end{align*}
where 
$$\phi(t,S,R,q) := \varepsilon e^{-\mu t} (1 + \|S\|_2^{2p} + \|R\|_2^{2p} ),$$
with $\varepsilon>0,$ $\mu>0$. The choice of the function $\phi$ allows us to look for a supremum in a bounded set with respect to $(S,S^a,S^b)$. Then the supremum is either reached for a point in $[0,T] \times  {\mathcal{D}}\times \mathcal Q$ or on $[0,T] \times\partial\mathcal{D}\times \mathcal Q$ (recall that $\Dc$ is open). But the continuity conditions tell us that if the supremum is reached on $[0,T]\times\partial\mathcal{D}\times\mathcal{Q}$, it is also reached in $[0,T] \times  {\mathcal{D}}\times \mathcal Q$. Since $\tilde{U} \geq \tilde{V}$ on $\{T\} \times \mathcal{D} \times \mathcal Q$, it is clear that $\hat{t}<T$. \\

Then, for all $n\in \mathbb{N}^*$, we can find $(t_n,S_n,R_n) \in [0,T] \times \mathbb{R}^2$ such that $(\bar{S}^a,\bar{S}^b,S_n),(\bar{S}^a,\bar{S}^b,R_n) \in \mathcal D$ and \begin{align*}
0 & <\tilde{V}(t_n,\bar{S}^a,\bar{S}^n, S_n, \bar{q}) - \tilde{U}(t_n,\bar{S}^a,\bar{S}^b, R_n, \bar{q})\\
&\qquad - \phi(t_n,S_n,R_n,\bar q) - n|S_n - R_n|^2 - \left( |t_n-\hat{t}|^2 + |S_n - \hat{S}|^4 \right)\\
& = \underset{(t,S,R) }{\sup} \tilde{V}(t,\bar{S}^a,\bar{S}^b, S, \bar{q}) - \tilde{U}(t,\bar{S}^a,\bar{S}^b, R, \bar{q})\\
&\qquad - \phi(t,S,R,\bar q) - n|S - R|^2 - \left( |t-\hat{t}|^2 + |S - \hat{S}|^4 \right).
\end{align*}

Then, we have
$$(t_n,S_n,R_n) \underset{n\to +\infty}{\to} (\hat{t},\hat{S},\hat{S}),$$
and
\begin{align*}
&\tilde{V}(t_n,\bar{S}^a,\bar{S}^b, S_n, \bar{q}) - \tilde{U}(t_n,\bar{S}^a,\bar{S}^b, R_n, \bar{q})\\
&\qquad - \phi(t_n,S_n,R_n) - n|S_n - R_n|^2 - \left( |t_n-\hat{t}|^2 + |S_n - \hat{S}|^4 \right)\\
& \underset{n\to +\infty}{\to } \tilde{V}(\hat{t}, \bar{S}^a, \bar{S}^b, \hat{S}, \bar{q}) - \tilde{U}(\hat{t}, \bar{S}^a, \bar{S}^b, \hat{S}, \bar{q}) - \phi(\hat{t},\hat{S},\hat{S}).
\end{align*}
For $n\in \mathbb{N}^*$, let us write for $(t,S,R) \in [0,T]\times \mathbb{R}^2$ $$\varphi_n(t,S,R):= \phi(t,S,R) + n|S-R|^2 + |t-\hat{t}|^2 + |S-\hat{S}|^4.$$
Then Ishii's Lemma (see \cite{barles2008second, crandall1992user}) guarantees that for any $\eta>0,$ we can find\\ $(y^1_n,p^1_n,A^1_n) \in \bar{\mathcal{P}}^+ \tilde{V}(t_n,\bar{S}^a, \bar{S}^b, S_n,\bar{q})$ and $(y^2_n,p^2_n,A^2_n) \in \bar{\mathcal{P}}^- \tilde{U}(t_n,\bar{S}^a, \bar{S}^b, R_n,\bar{q})$ such that
$$y^1_n - y^2_n = \partial_t \varphi_n(t_n,S_n,R_n), \quad (p^1_n,p^2_n) = \left(\partial_S \varphi_n, -\partial_R \varphi_n \right)(t_n,S_n,R_n)$$
and
$$\begin{pmatrix} A^1_n & 0\\ 0 & -A^2_n \end{pmatrix} \leq H_{SR} \varphi_n (t_n,S_n,R_n) + \eta \left( H_{SR} \varphi_n (t_n,S_n,R_n) \right)^2, $$
where $H_{SR}\varphi_n (t_n,.,.)$ denotes the Hessian of $\varphi_n(t_n,.,.).$ Applying Lemma \ref{viscolemma}, we get
\begin{align*}
\rho & \left(  \tilde{V}(t_n,\bar{S}^a, \bar{S}^b, S_n, \bar{q}) - \tilde{U}(t_n,\bar{S}^a, \bar{S}^b, R_n, \bar{q}) \right) \leq   y^1_n - y^2_n + \frac 12 \sigma^2 (A^1_n - A^2_n)\\
& + \frac{\lambda}{1+(\kappa\alpha^a)^2}\max_{\ell^a\in \{0,1\}}\Bigg\{ \ell^a\Big(e^{\rho t_n}\bar{S}^a+\tilde{V}(t_n,\bar{S}^a,\bar{S}^b,S_n,\bar{q}-\ell^a)-\tilde{V}(t_n,\bar{S}^a,\bar{S}^b,S_n,\bar{q})\Big)\Bigg\} \nonumber\\
& + \frac{\lambda}{1+(\kappa\alpha^b)^2}\max_{\ell^b\in \{0,1\}}\Bigg\{ \ell^b\Big(e^{\rho t_n}(-\bar{S}^b)+\tilde{V}(t_n,\bar{S}^a,\bar{S}^b,S_n,\bar{q}+\ell^b)-\tilde{V}(t_n,\bar{S}^a,\bar{S}^b,S_n,\bar{q})\Big)\Bigg\}\\
& - \frac{\lambda}{1+(\kappa\alpha^a)^2}\max_{\ell^a\in \{0,1\}}\Bigg\{ \ell^a\Big(e^{\rho t_n}\bar{S}^a+\tilde{U}(t_n,\bar{S}^a,\bar{S}^b,R_n,\bar{q}-\ell^a)-\tilde{U}(t_n,\bar{S}^a,\bar{S}^b,R_n,\bar{q})\Big)\Bigg\} \nonumber\\
& - \frac{\lambda}{1+(\kappa\alpha^b)^2}\max_{\ell^b\in \{0,1\}}\Bigg\{ \ell^b\Big(e^{\rho t_n}(-\bar{S}^b)+\tilde{U}(t_n,\bar{S}^a,\bar{S}^b,R_n,\bar{q}+\ell^b)-\tilde{U}(t_n,\bar{S}^a,\bar{S}^b,R_n,\bar{q})\Big)\Bigg\}.
\end{align*}
Moreover, we have $$ H_{SR} \varphi_n (t_n,S_n,R_n) = \begin{pmatrix} \partial^2_{SS} \phi(t_n,S_n,R_n) + 2n +12 (S_n-\hat{S})^2 & \partial^2_{SR} \phi(t_n,S_n,R_n)-2n\\ \partial^2_{SR} \phi(t_n,S_n,R_n)-2n & \partial^2_{SR}\phi(t_n,S_n,R_n) + 2n  \end{pmatrix}.$$
It follows that
\begin{align*}
\rho & \left(  \tilde{V}(t_n,\bar{S}^a, \bar{S}^b, S_n, \bar{q}) - \tilde{U}(t_n,\bar{S}^a, \bar{S}^b, R_n, \bar{q}) \right) \leq   \partial_t \phi(t_n,S_n,R_n) + 2(t_n-\hat{t})\\
& + \frac 12 \sigma^2 \left(\partial^2_{SS}\phi(t_n,S_n,R_n) + \partial^2_{RR}\phi(t_n,S_n,R_n) + 2\partial^2_{SR}\phi(t_n,S_n,R_n) + 12(S_n-\hat{S})\right) + \eta C_n\\
& + \frac{\lambda}{1+(\kappa\alpha^a)^2}\max_{\ell^a\in \{0,1\}}\Bigg\{ \ell^a\Big(e^{\rho t_n}\bar{S}^a+\tilde{V}(t_n,\bar{S}^a,\bar{S}^b,S_n,\bar{q}-\ell^a)-\tilde{V}(t_n,\bar{S}^a,\bar{S}^b,S_n,\bar{q})\Big)\Bigg\} \nonumber\\
& + \frac{\lambda}{1+(\kappa\alpha^b)^2}\max_{\ell^b\in \{0,1\}}\Bigg\{ \ell^b\Big(e^{\rho t_n}(-\bar{S}^b)+\tilde{V}(t_n,\bar{S}^a,\bar{S}^b,S_n,\bar{q}+\ell^b)-\tilde{V}(t_n,\bar{S}^a,\bar{S}^b,S_n,\bar{q})\Big)\Bigg\}\\
& - \frac{\lambda}{1+(\kappa\alpha^a)^2}\max_{\ell^a\in \{0,1\}}\Bigg\{ \ell^a\Big(e^{\rho t_n}\bar{S}^a+\tilde{U}(t_n,\bar{S}^a,\bar{S}^b,R_n,\bar{q}-\ell^a)-\tilde{U}(t_n,\bar{S}^a,\bar{S}^b,R_n,\bar{q})\Big)\Bigg\} \nonumber\\
& - \frac{\lambda}{1+(\kappa\alpha^b)^2}\max_{\ell^b\in \{0,1\}}\Bigg\{ \ell^b\Big(e^{\rho t_n}(-\bar{S}^b)+\tilde{U}(t_n,\bar{S}^a,\bar{S}^b,R_n,\bar{q}+\ell^b)-\tilde{U}(t_n,\bar{S}^a,\bar{S}^b,R_n,\bar{q})\Big)\Bigg\},
\end{align*}
where $C_n$ does not depend on $\eta.$ Therefore, as the maximums on the right-hand side are always positive, we deduce that for all $n \in \mathbb{N}^*$,
\begin{align*}
\rho & \left(  \tilde{V}(t_n,\bar{S}^a, \bar{S}^b, S_n, \bar{q}) - \tilde{U}(t_n,\bar{S}^a, \bar{S}^b, R_n, \bar{q}) \right) \leq   \partial_t \phi(t_n,S_n,R_n) + 2(t_n-\hat{t})\\
& + \frac 12 \sigma^2 \left(\partial^2_{SS}\phi(t_n,S_n,R_n) + \partial^2_{RR}\phi(t_n,S_n,R_n) + 2\partial^2_{SR}\phi(t_n,S_n,R_n) + 12(S_n-\hat{S})\right)\\
& + \frac{\lambda}{1+(\kappa\alpha^a)^2}\max_{\ell^a\in \{0,1\}}\Bigg\{ \ell^a\Big(e^{\rho t_n}\bar{S}^a+\tilde{V}(t_n,\bar{S}^a,\bar{S}^b,S_n,\bar{q}-\ell^a)-\tilde{V}(t_n,\bar{S}^a,\bar{S}^b,S_n,\bar{q})\Big)\Bigg\} \nonumber\\
& + \frac{\lambda}{1+(\kappa\alpha^b)^2}\max_{\ell^b\in \{0,1\}}\Bigg\{ \ell^b\Big(e^{\rho t_n}(-\bar{S}^b)+\tilde{V}(t_n,\bar{S}^a,\bar{S}^b,S_n,\bar{q}+\ell^b)-\tilde{V}(t_n,\bar{S}^a,\bar{S}^b,S_n,\bar{q})\Big)\Bigg\}.
\end{align*}
As $\tilde{V}$ is continuous and $(t_n,S_n)_n$ converges to $(\hat t, \hat S)$, the last two terms are bounded from above by some constant $M.$ Then by sending $n$ to infinity, we get
\begin{align*}
\rho \left(  \tilde{V}(\hat{t},\bar{S}^a, \bar{S}^b, \hat{S}, \bar{q}) \right. & \left.- \tilde{U}(\hat{t},\bar{S}^a, \bar{S}^b, \hat{S}, \bar{q}) \right) \leq   \partial_t \phi(\hat{t},\hat{S},\hat{S})\\
& + \frac 12 \sigma^2 \left(\partial^2_{SS}\phi(\hat{t},\hat{S},\hat{S}) + \partial^2_{RR}\phi(\hat{t},\hat{S},\hat{S}) + 2\partial^2_{SR}\phi(\hat{t},\hat{S},\hat{S}) \right) + M.
\end{align*}
For $\mu>0$ large enough, the right-hand side is strictly negative, and as $\rho>0$ we get $$\tilde{V}(\hat{t},\bar{S}^a, \bar{S}^b, \hat{S}, \bar{q}) - \tilde{U}(\hat{t},\bar{S}^a, \bar{S}^b, \hat{S}, \bar{q})<0,$$ hence the contradiction. 
\end{proof}


\subsection{Effects of the uncertainty zones on $h$}
\label{app::exemp}
We keep the same parameters as in Section \ref{sec_numerical_results} and take $\alpha^a=0.01$ and $\alpha^b=0.00625$. We plot the value function of the market maker's problem (the function $h$) on some small range of values of $S$. Note that $S=10.5$ is on both discrete grids.
\begin{figure}[!h]
\centering
\includegraphics[width=18cm]{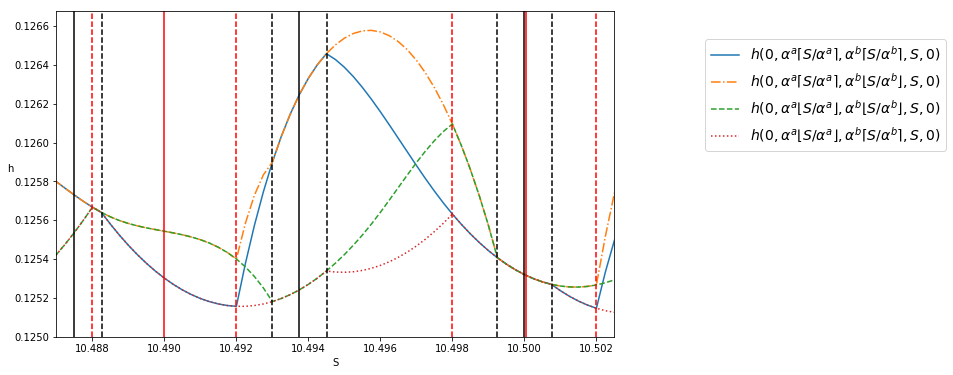}
\caption{Value function $h$ of the market maker for $q=0$, as a function of $S$.}
\label{uz_ef}
\end{figure}
We distinguish 4 possible cases, depending on whether 
\begin{itemize}
    \item $S^a=\alpha^a\lf S/\alpha^a\rf$ and $S^b=\alpha^b\lf S/\alpha^b\rf$ (green dots),
    \item $S^a=\alpha^a\lf S/\alpha^a\rf$ and $S^b=\alpha^b\lc S/\alpha^b\rc$ (red dash-dots),
    \item $S^a=\alpha^a\lc S/\alpha^a\rc$ and $S^b=\alpha^b\lf S/\alpha^b\rf$ (orange dash),
    \item $S^a=\alpha^a\lc S/\alpha^a\rc$ and $S^b=\alpha^b\lc S/\alpha^b\rc$ (blue solid).
\end{itemize}
Note that depending on the value of $S$, some of those cases can be excluded. The solid vertical red and black lines represent respectively the values on the ask ($\alpha^a\mathbb{N}$) and the bid grid ($\alpha^b\mathbb{N}$). The dotted vertical lines represent the limits of the uncertainty zones on each side. \\

In the uncertainty zones, the value function $h$ depends non-trivially on $S^a$ and $S^b$. Thanks to the continuity conditions at the boundaries of the uncertainty zones, we get a smooth behavior of $h$ when $S$ exits a zone. Remark that when $S\in [10 \pm ((\frac{1}{2}-\eta^a)\alpha^a)\wedge((\frac{1}{2}-\eta^b)\alpha^b)]$, necessarily $S^a=S^b=10$. \\

In our example, $\alpha^a>\alpha^b$ and $(\frac{1}{2}-\eta_a)\alpha^a>(\frac{1}{2}-\eta_b)\alpha^b$. So, if $S$ is in  $(10+(\frac{1}{2}-\eta_b)\alpha^b,(\frac{1}{2}-\eta_a)\alpha^a)$, necessarily $S^a=10$, but $S^b$ can take either the value $10$ or $10+\alpha^b$ depending on whether $S$ comes from higher prices or lower prices. This is why there are two curves in the interval $(10+(\frac{1}{2}-\eta_b)\alpha^b,(\frac{1}{2}-\eta_a)\alpha^a)$. At $(\frac{1}{2}-\eta_a)\alpha^a$, two additional curves appear as $S^a$ can also be two different values.

\bibliographystyle{abbrv}
\bibliography{biblio.bib}

\end{document}